\pgfplotsset{compat=newest} 
\newtheorem{theorem}{Theorem}
\newtheorem{fact}{Fact}
\newtheorem{lemma}{Lemma}
\newtheorem{definition}{Definition}
\newtheorem{remark}{Remark}
\def\QED{\mbox{\rule[0pt]{1.3ex}{1.3ex}}}
\renewenvironment{proof}
        {\begin{trivlist} \item[] {\itshape Proof:\ } }
        {\hfill \QED \end{trivlist}}
\def\argmax{\mathop{\rm argmax}}
\newfont{\bbb}{msbm10 scaled 700}
\newfont{\bb}{msbm10 scaled 1100}
\newcommand{\CC}{\mbox{\bb C}}
\newcommand{\PP}{\mbox{\bb P}}
\newcommand{\RR}{\mbox{\bb R}}
\newcommand{\EE}{\mbox{\bb E}}
\newcommand{\HH}{\mbox{\bb H}}
\newcommand{\VV}{\mbox{\bb V}}
\newcommand{\yy}{\mathbbm{y}}
\newcommand{\xx}{\mathbbm{x}}
\newcommand{\zz}{\mathbbm{z}}
\newcommand{\sss}{\mathbbm{s}}
\newcommand{\hh}{\mathbbm{h}}
\newcommand{\vvv}{\mathbbm{v}}
\newcommand{\Rs}{\mathscr{R}}
\newcommand{\Cs}{\mathscr{C}}
\newcommand{\av}{{\bf a}}
\newcommand{\hv}{{\bf h}}
\newcommand{\rv}{{\bf r}}
\newcommand{\vv}{{\bf v}}
\newcommand{\xv}{{\bf x}}
\newcommand{\zv}{{\bf z}}
\newcommand{\zerov}{{\bf 0}}
\newcommand{\Am}{{\bf A}}
\newcommand{\Fm}{{\bf F}}
\newcommand{\Qm}{{\bf Q}}
\newcommand{\Ym}{{\bf Y}}
\newcommand{\Zm}{{\bf Z}}
\newcommand{\Ac}{{\cal A}}
\newcommand{\Cc}{{\cal C}}
\newcommand{\Ec}{{\cal E}}
\newcommand{\Gc}{{\cal G}}
\newcommand{\Ic}{{\cal I}}
\newcommand{\Lc}{{\cal L}}
\newcommand{\Nc}{{\cal N}}
\newcommand{\Rc}{{\cal R}}
\newcommand{\Sc}{{\cal S}}
\newcommand{\Uc}{{\cal U}}
\newcommand{\lambdav}{\hbox{\boldmath$\lambda$}}
\newcommand{\epsilonv}{\hbox{\boldmath$\epsilon$}}
\newcommand{\nuv}{\hbox{\boldmath$\nu$}}
\newcommand{\muv}{\hbox{\boldmath$\mu$}}
\newcommand{\phiv}{\hbox{\boldmath$\phi$}}
\newcommand{\Sigmam}{\hbox{\boldmath$\Sigma$}}
\newcommand{\Omegam}{\hbox{\boldmath$\Omega$}}
\newcommand{\trace}{{\hbox{tr}}}
\renewcommand{\arg}{{\hbox{arg}}}
\newcommand{\eqdef}{\stackrel{\Delta}{=}}
\newcommand{\defines}{{\,\,\stackrel{\scriptscriptstyle \bigtriangleup}{=}\,\,}}
\newcommand{\herm}{{\sf H}}
\newcommand{\transp}{{\sf T}}
\newcommand{\SINR}{{\sf SINR}}
\newcommand{\SNR}{{\sf SNR}}
\newcommand{\Ktot}{K} 
\newcommand{\Kact}{K_{\rm act}}
\newcommand{\gammadyn}{\widehat{\gamma}}
\renewcommand{\arg}{{\rm arg}}
\pgfplotsset{
	kurze Legende/.style={%
		legend image code/.code={
			\draw[##1,line width=0.6pt]
			plot coordinates {
				(0cm,0cm)
				(0.25cm,0cm)
			};%
		}
	}
}
\pgfplotsset{
	normale Legende/.style={%
		legend image code/.code={
			\draw[##1,line width=1.2pt]
			plot coordinates {
				(0cm,0cm)
				(0.4cm,0cm)
			};%
		}
	}
}
\pgfplotsset{
	sehr kurze Legende/.style={%
		legend image code/.code={
			\draw[##1,line width=0.7pt]
			plot coordinates {
				(0cm,0cm)
				(0.17cm,0cm)
			};%
		}
	}
}
\begin{document}

\setlength{\abovedisplayskip}{1pt}
\setlength{\belowdisplayskip}{1pt}
\setlength{\abovedisplayshortskip}{1pt}
\setlength{\belowdisplayshortskip}{1pt}

\title{Fairness Scheduling in User-Centric Cell-Free Massive MIMO Wireless Networks}

\author{\IEEEauthorblockN{Fabian G\"ottsch\IEEEauthorrefmark{1},
		Noboru Osawa\IEEEauthorrefmark{2}, Issei Kanno\IEEEauthorrefmark{2}, Takeo Ohseki\IEEEauthorrefmark{2}, Giuseppe Caire\IEEEauthorrefmark{1}}
	\thanks{\IEEEauthorrefmark{1}Technical University of Berlin, Germany. Emails: \{fabian.goettsch, caire\}@tu-berlin.de}
\thanks{\IEEEauthorrefmark{2}KDDI Research Inc., Japan. Emails: \{nb-oosawa, is-kanno, ohseki\}@kddi.com}}

\maketitle

\vspace{-1cm}

\begin{abstract}
We consider a user-centric cell-free massive MIMO wireless 
network with $L$ remote radio units, each with $M$ antennas, serving $\Ktot$ single-antenna user devices (UEs). 
Most of the current literature considers the regime $LM \gg \Ktot$, where the $K$ UEs are active on each time-frequency slot, and evaluates the system performance in terms of {\em ergodic rates}. 
In this paper, we take a quite different viewpoint. We observe that the regime of $LM \gg \Ktot$ 
corresponds to a lightly loaded system with low sum spectral efficiency (SE). In contrast, in most relevant scenarios, the number of 
UEs is much larger than the total number of antennas (think of a sport event with $K \sim 10,000$ users and $ML \sim 200$ antennas).
To achieve high sum SE and handle $\Ktot \gg ML$, users must be scheduled over the time-frequency resource. 
The number of active users $\Kact \leq \Ktot$ must be carefully chosen such that: 1) the network operates close to its maximum SE; 
2) the active user set must be chosen dynamically over time in order to enforce fairness in terms of per-user time-averaged 
{\em throughput rates}. The fairness scheduling problem is canonically formulated
as the maximization of a suitable concave componentwise non-decreasing {\em network utility function} of the per-user rates. 
The intermitted user transmission due to scheduling imposes slot-by-slot coding/decoding, which in turn
prevents the achievability of ergodic rates. 
Hence, we model the per-slot service rates using information outage probability.
In order to obtain a tractable problem, we make a ``decoupling''  assumption on the CDF of the instantaneous mutual information seen at each UE $k$ receiver. We approximately enforce this condition by introducing a conflict graph that prevents the simultaneous 
scheduling of users with large pilot contamination conflict and propose an adaptive scheme for instantaneous 
service rate scheduling based on locally estimating the mutual information CDF at each UE.
Overall, the proposed dynamic scheduling is the first to address such system dimensions with tens of thousand users in a scalable way, is robust to system model uncertainties, and can be easily implemented in practice.
\end{abstract}

\begin{IEEEkeywords}
	User-centric, cell-free massive MIMO, fairness, scheduling, information outage probability.
\end{IEEEkeywords}


\section{Introduction} \label{intro}

Multiuser Multiple-Input Multiple-Output (MU-MIMO) has been widely investigated from a theoretical viewpoint 
\cite{Caire-Shamai-TIT03,Viswanath-Tse-TIT03,Weingarten-Steinberg-Shamai-TIT06,Caire-Jindal-Kobayashi-Ravindran-TIT10} and has become a cornerstone technology to achieve high spectral efficiency and serve a large number of user equipments (UEs) in both cellular \cite{3gpp38211,Larsson-book,cell-free-fnt} and local area wireless networks (see  \cite{khorov2018tutorial,qu2019survey} and references therein). 
Massive MIMO is a convenient implementation of MU-MIMO where the number of base station (BS) 
antennas $M$ is much larger than the number of simultaneously served UEs 
\cite{marzetta2010noncooperative,Larsson-book}. 
While massive MIMO was originally proposed for cell-based systems with per-BS processing \cite{marzetta2010noncooperative,hoydis2013massive,huh2012achieving}, more recently the concept of 
cell-free user-centric networks has been promoted in order to provide a more uniform service to very densely packed users 
\cite{ngo2017,nay2017,bjo2020,cell-free-fnt,miretti2022, goettsch2022}. 
In this paper we refer to the disaggregated reference model of 3GPP \cite{3gpp38300} (see Fig.~\ref{fig-network}) 
and consider a system formed by $L$  radio units (RUs), each equipped with $M$ antennas, 
connected to decentralized processing units (DUs) via a flexible fronthaul network. 
The RUs implement basic low-level PHY functions, such as FFTs/IFFTs for OFDM modulation,
A/D and D/A conversion, and baseband/RF modulation/demodulation. The DUs implement the high-level PHY functions such as MIMO precoding/detection and coding/decoding of the individual user data streams. 
Higher layer functions such as user-centric cluster formation, pilot and power allocation, and dynamic scheduling,
are implemented at a higher hierarchical level by one or more centralized units (CUs). 
We focus on {\em scalable} systems as defined in \cite{bjo2020,cell-free-fnt}, where each UE $k\in [\Ktot]$\footnote{We denote the set of the first positive $N$ integers by $[N] = \{ 1, \dots, N \}$.} 
is associated to a user-centric finite size set $\Cc_k$ of RUs and each RU $\ell \in [L]$ serves a finite size set $\Uc_\ell$ 
of UEs.  

\subsection{Motivation}  \label{motivation}

Early works on cell-free massive MIMO assumed $M = 1$ and $L  > \Ktot$ \cite[Ch. 2]{cell-free-fnt}.
Recognizing that the placement of ``more RUs than UEs'' is hardly justifiable from an operator deployment cost viewpoint, more recent works  have considered a more realistic RU/UE density 
regime $L < \Ktot < LM$ with $M > 1$  (e.g., see \cite{bjo2020,cell-free-fnt,miretti2022, goettsch2022,chen2022}). 
In these works, the $\Ktot$ UEs are all simultaneously active and the system performance is studied in 
terms of the per-user {\em ergodic rates} (e.g., see \cite{Larsson-book,bjo2020,cell-free-fnt,miretti2022,goettsch2022}).  However, the achievability of ergodic rates assumes that coding can be performed over a sufficiently large sequence of independent channel fading states, implying continuous transmission over many time-frequency slots. This assumption may be incompatible with dynamic scheduling and per-slot coding/decoding, 
as well as with the ``low latency'' requirement, which is as a key feature of 5G \cite{itu2017requirements, 7529226}.

\begin{figure}[t!]
\centerline{\includegraphics[width=\linewidth]{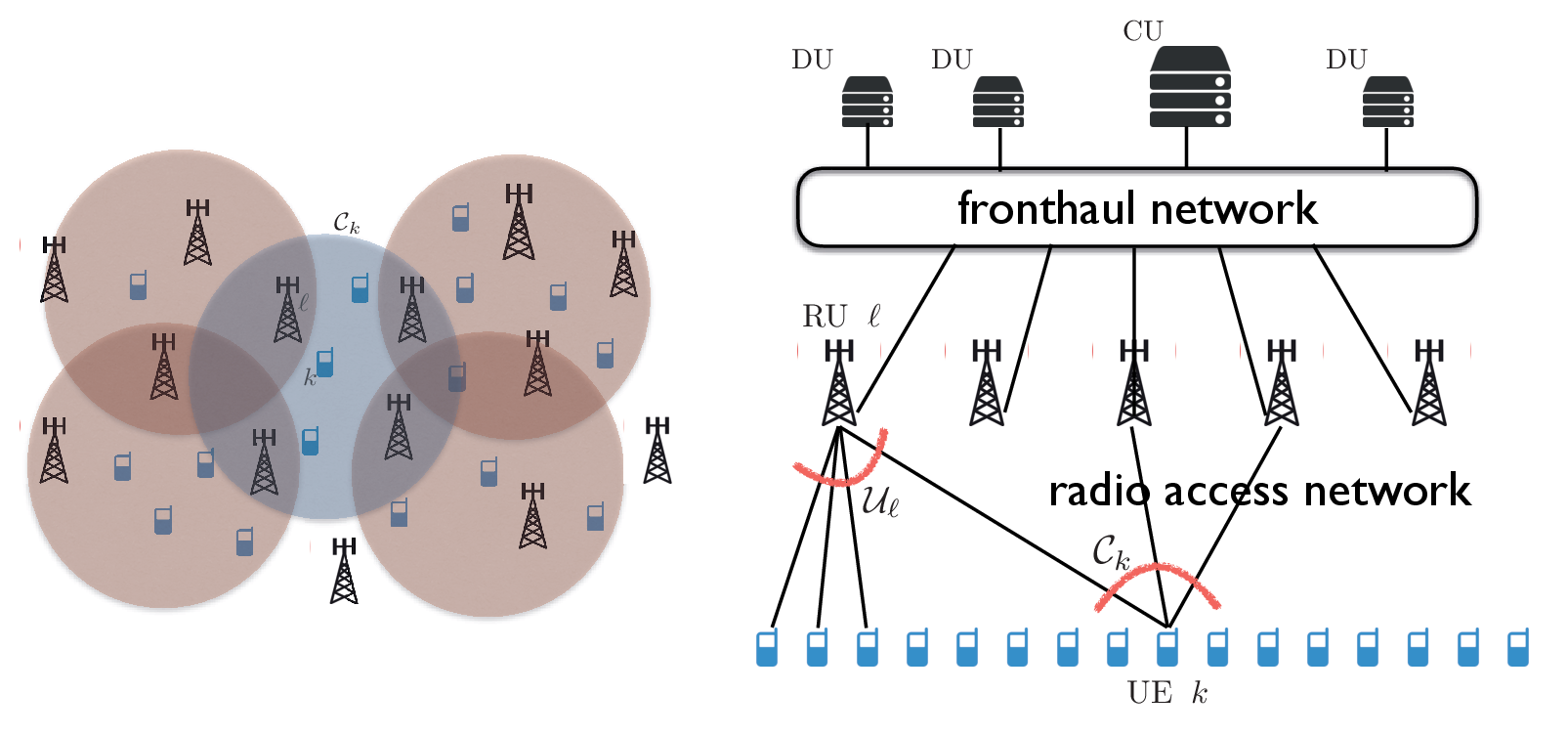}}
\vspace{-.3cm}
\caption{Left: a sketch of the distributed cell-free user-centric network. Right: a sketch of the disaggregated 
network architecture.  The cluster processors are hosted in the DUs. The CU hosts centralized (low complexity) processes, such as the user scheduler studied in this paper.}
\label{fig-network}
\vspace{-.4cm}
\end{figure}

We claim that the regime $\Ktot < LM$ with continuous transmission 
is not a practical regime of interest for such networks. 
Instead, a much more relevant regime is $\Ktot \gg LM$ where users must be scheduled over
the time-frequency resource. 
Furthermore, we also claim that the per-user ergodic rate is not a significant performance metric in this scenario. 
Instead, the per-user (long-term average) {\em throughput rate} is a much more meaningful metric. 
In this sense, we must distinguish between the {\em instantaneous rate} scheduled to the $\Kact < \Ktot$ 
active users on each given scheduling slot, and the {\em throughput rate} that each user accumulates as a time-average over a long sequence of slots. It is clear that some fairness criterion must be imposed such that all users get a chance to be 
scheduled over time and achieve a non-zero  throughput rate, even though only a subset of active users is served with a positive instantaneous rate on each scheduling slot.  
%

\begin{figure}[t!]
	\centering
	\begin{subfigure}{0.49\linewidth}
		\includegraphics[width=\linewidth]{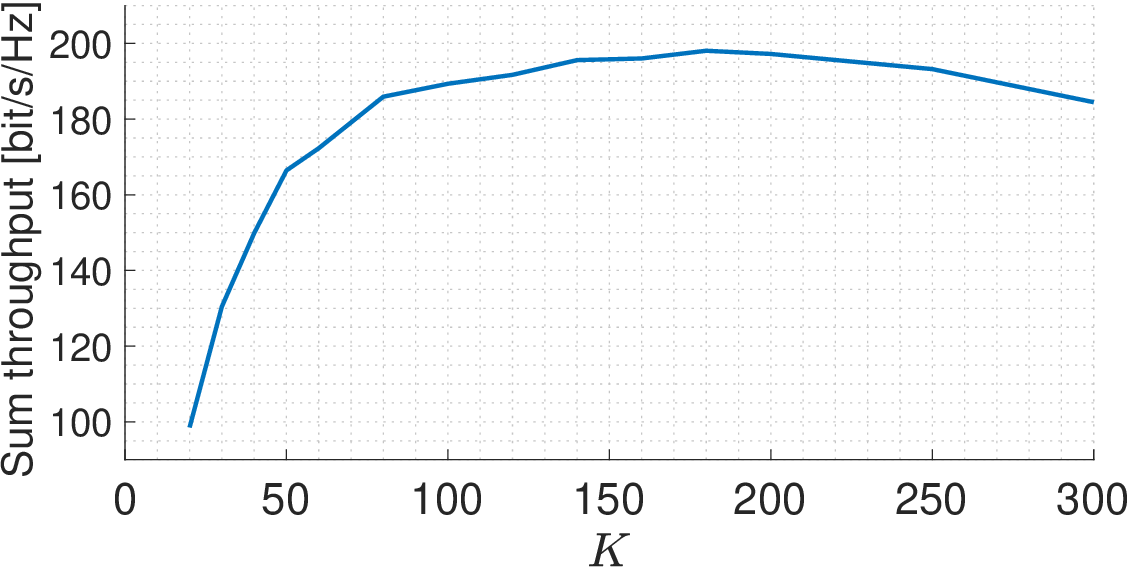}
		\caption{} 
	\end{subfigure}%
	\begin{subfigure}{0.49\linewidth}
		\includegraphics[width=\linewidth]{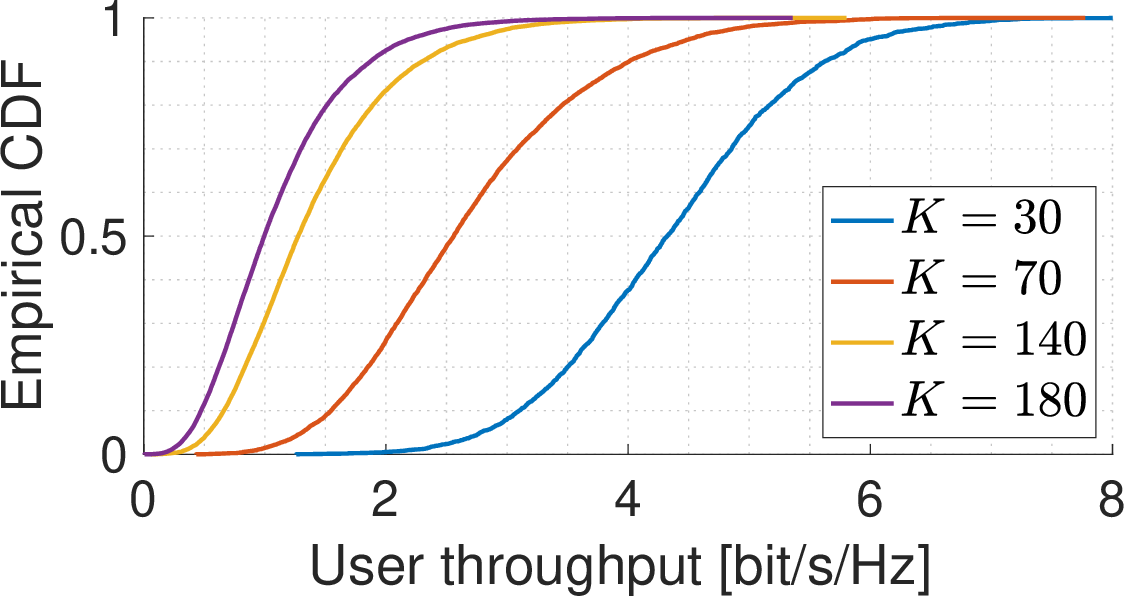}
		\caption{} 
	\end{subfigure}
	\vspace{-.2cm}
	\caption{Sum throughput rate and the empirical CDF of the user throughput rate for different $\Ktot$, where all other system parameters are fixed.} \label{fig:sumSEvsK}
	\vspace{-.4cm}
\end{figure}
To motivate the problem, consider Fig.~\ref{fig:sumSEvsK} (a),  showing the total spectral efficiency (SE), i.e., the sum throughput rate, in bits per channel use (or bit/s/Hz) for the 
reference system described in Section~\ref{sec:simulations} with $LM = 200$, where $L=20$, $M=10$, as a function of the number of users $\Ktot$, when 
all users are continuously active and encoding/decoding is done on blocks of $F = 10$ resource blocks (RBs) in frequency.
We notice that for $\Ktot$ much smaller than $LM$ the SE is small and grows linearly with $\Ktot$. 
Interestingly, most of the current literature has focused on this {\em lightly loaded} regime of low total SE and relatively 
high per-user rates  (see the corresponding cumulative distribution function (CDF) in 
Fig.~\ref{fig:sumSEvsK} (b)).  As $\Ktot$ increases, the SE ``flattens out'' and reaches its peak. 
Near the maximum SE, the per-user throughput rates collapse (see the corresponding CDF in Fig.~\ref{fig:sumSEvsK} (b)).
If $\Ktot$ keeps increasing, the SE slightly decreases, showing that in this regime the system becomes
congested.\footnote{Of course, mathematically, if one can optimize the system with full statistical information and just care about the SE, the optimized SE cannot decrease as $\Ktot$ increases because the optimal scheme would allocate power in a way that eventually some users will get almost zero power and therefore ``disappear'' effectively from the system. However, the point we want to make here is that if we insist on equal transmission power per UE in the UL or per DL stream, and keep serving more and more users, the ``naive system'' not only reaches an interference limited regime, but also decreases its overall SE performance since it enters a congested regime.}  
Hence, a good scheduler should choose the number of active users $\Kact$ (slightly) on the left of the SE peak, i.e.,  at the end of the ``linear regime'' of SE vs. the number of active users. 
Beyond this point, a marginal increase of the SE is achieved at the cost of a large fraction of users with very small rates, 
which does not make sense from a practical service viewpoint. Extensive system simulation shows (see also our short conference version \cite{gottsch2022fairness} of this work) that $\Kact \approx \frac{LM}{2}$ is usually a good choice, 
and the exact number is a design parameter that depends on the level of frequency diversity $F$, and the desired tradeoff between sum SE and per-user throughput rate.

In order to gain intuition into the problem at hand, consider for example a system where the total bandwidth is partitioned into 
100 frequency resource blocks (RBs) per time slot, serving a total number of $10,000$ users with $L = 20$ RUs with $M = 10$ antennas each (e.g., see the real-world deployment in \cite{7421132}). 
Every active user is allocated a block of $F = 10$ RBs in frequency to achieve a certain level of frequency diversity. 
Thus, the scheduler dynamically chooses on every slot a set of $\Kact \approx \frac{LM}{2} = 100$
users out of $1,000$ per RB in order to 
exploit the total system spatial degrees of freedom. 
The scheduler must also allocate an ``instantaneous'' rate to each active user since encoding/decoding is performed 
block by block, i.e., coding over a virtually infinite sequence of fading states is not possible.  
In this case, the instantaneous rate must be scheduled according to the notion of {\em information outage rate} (e.g., see \cite{biglieri1998fading}), where a non-vanishing block error probability is taken explicitly into account.


\subsection{Novelty and contributions}  \label{contributions}

In light of the above motivation, we study the fairness scheduling problem for a cell-free user-centric wireless network as defined in  \cite{bjo2020,cell-free-fnt,miretti2022,goettsch2022}. We consider a full buffer model and 
canonically formulate fairness in terms of 
the maximization of a suitably defined network utility function, i.e., a concave componentwise non-decreasing function of the user throughput rates \cite{georgiadis2006resource,neely2008fairness,shirani2010mimo}. In particular, we consider {\em proportional fairness} (PF) and {\em hard fairness} (HF), which are special cases of the family of so-called 
$\alpha$-fairness utility functions \cite{mo2000fair}.
The network utility maximization (NUM) problem is solved using the Lyapunov drift-plus-penalty (DPP) approach, which 
naturally yields a dynamic scheduling scheme \cite{georgiadis2006resource}. 
In addition, the motivation for using the proposed Lyapunov DPP framework is that, as will be shown later, with appropriately chosen parameters the framework allows to approximate the \textit{optimal} solution of the NUM as closely as desired.
		
	A seemingly analogous problem in combination with the Lyapunov DPP approach is considered in \cite{chen2019dynamic}, which is the only other work using the the Lyapunov DPP for scheduling in cell-free massive MIMO to the authors' knowledge. However, we note the following major differences. The PHY rate allocation in \cite{chen2019dynamic} is posed as a weighted sum rate maximization (WSRM) problem and in particular as a power allocation problem, where the allocated (ergodic) user rates are assumed to \textit{always} be achieved. Hence, there is a zero probability of a packet error. However, as already said, 
	ergodic rates are incompatible (and in general not achievable) under slot-by-slot decoding and the
	``low latency'' requirement.
	Moreover, applying the scheme in \cite{chen2019dynamic} (all $\Ktot$ users active and perform power control over all $\Ktot$ users) to systems of the size considered in this paper, the complexity would be enormous, making real-time scheduling computationally very challenging.
	Also, in  \cite{chen2019dynamic} the active user set is defined before solving the WSRM 
	by assuming that all users with non-empty queues transmit. This is possible for very lightly loaded systems as considered in \cite{chen2019dynamic},
	 with $100$ RU antennas and only $K = 10$ users in the network, which is definitely not the scenario considered in this work. Due to these differences, the scheme proposed in  \cite{chen2019dynamic} is not applicable to the problem considered in our work. No other work in the cell-free massive MIMO literature has provided results addressing such a large network
	(see the example above and the numbers in the simulation section) in a scalable and efficient way.

The application of the general Lyapunov DPP framework to our specific problem is non-trivial, because the instantaneous service rates, i.e., the rates scheduled to the active users, are random variables due to the finite diversity achieved in time and frequency (due to slot-by-slot coding) and in space (due to the limited number of spatial degrees of freedom
which prevents massive MIMO ``channel hardening'' \cite{8379438}). 
Scheduling with random instantaneous rates can be handled by using information outage probability as in \cite{shirani2010mimo}. This, however, requires the knowledge of the individual CDF of the mutual information at each user receiver, which in turn depends on the scheduling decision, i.e., of the selected 
set of active users. 

In order to obtain a tractable problem, we make a critical ``statistical decoupling'' assumption: 
namely,  that the CDF of the mutual information at each user $k$ receiver depends only on the local channel 
statistics of UE $k$. This assumption holds true in the massive MIMO limit of $M \rightarrow \infty$ with 
constant $\Kact/M$, under  certain conditions on the channel statistics  \cite{shirani2010mimo,Huh11,bethanabhotla2015optimal}. For systems with finite number of RU antennas 
$M$, the statistical decoupling assumption approximately holds when there is sufficient ``self-averaging'' of the 
interference term at the denominator of the {\em Signal to Interference plus Noise Ratio} (SINR) at each user decoder input. 
Intuitively, self-averaging holds when the multiuser interference is due to many small contributions. Hence, we 
enforce this condition by imposing that no two users causing strong mutual interference 
can be scheduled at the same time (this translates in a precise mathematical condition as seen later). 
To this purpose, we propose a novel {\em conflict graph} approach that prevents such ``colliding'' users to be scheduled together.  Furthermore, since the mutual information CDF is intractable in closed form, we propose to adaptively estimate it 
at each UE $k$ over a sliding window of past slots.  The locally learned empirical CDF is used to schedule the instantaneous rate of the active users.  Notice that rate adaptation based on the empirical statistics (e.g., RSSI, block error rate, etc.) 
collected over a sliding window of past slots is currently implemented in real systems. 
Thus, our approach can be seen as an information-theoretic version of such practical schemes. 

The spread of the mutual information CDF is reduced as the frequency diversity order $F$ increases.
Our analysis can capture the effect of the finite frequency diversity order $F$ on the overall system performance, 
unlike the ergodic rate analysis, where the finite diversity order effect is completely lost. 
With the knowledge of the empirical mutual information CDF for each user, 
the scheduler solves at each slot a constrained maximization of the weighted sum outage rate, where 
the weights are recursively calculated as the backlogs of {\em virtual queues} and the constraint is expressed by the conflict graph.  The problem takes the form of an integer linear program, which can be efficiently solved with standard tools 
even for fairly large systems. 

In \cite{georgiadis2006resource}, two fundamental scheduling problem formulations are provided : 1) queue stability with exogenous arrivals and transmission (not virtual) queues corresponding to a data buffer, 2) throughput fairness with ``infinite buffer'' and virtual queues. In this work we consider the second framework, where the ``infinite buffer'' assumption implies that each scheduled
user can always transmit as much data as the PHY allows. The virtual queues do not correspond to actual queued data, 
but are iteratively updated ``weights'' of the scheduling algorithm. We would like to remark that, in reality, 
no user has a truly infinite buffer, while for typical Internet applications the purely random arrival traffic is also unrealistic. 
Typically, data are transmitted/requested in bulk, e.g., for uploading a picture, or streaming a video. Hence, 
on the time scale of a few tens of seconds, the infinite buffer model for such applications is relevant.
For example, the famous proportional fairness scheduling in 1xEV-DO systems \cite{qualcomm2001pfs} was motivated by such argument. In this context, the infinite buffer is a mathematical abstraction to make the statement mathematically precise.

We also notice that well-known problems such as pilot allocation and user-centric cluster formation have been treated 
in the current literature in the assumption that all users are active all the time
\cite{cell-free-fnt,miretti2022,goettsch2022,chen2022}. In contrast, in the presence of dynamic scheduling, an important question is whether pilot assignment and cluster formation should be performed 
{\em at each scheduling decision} (dynamic reassignment) or once for all to all users independently of the scheduling decisions
(fixed assignment).  
Notice that the dynamic pilot assignment is supported by the current 
5GNR standard, where the demodulation reference signal (DMRS) for channel estimation is associated with the physical uplink shared channel \cite{3gpp38211}. A fixed DMRS pilot assignment to users is not considered in the current 5GNR standard, but could be implemented in order to decrease the control signaling overhead. 
In this work, we compare the {\em reassignment} and the {\em fixed assignment} options in terms of 
their SE performance and show that a well-designed system with fixed assignment does not suffer large degradation with respect to the more complex and overhead-demanding reassignment. 

\subsection{Outline}
In the next Section, we will describe the physical layer system model. Section \ref{sec:scheduling} introduces the information outage rate and the fairness scheduling framework. The algorithmic solutions to approach the scheduling problem and practical aspects for implementation are presented in Section \ref{sec:algorithmic_practical_solutions}. Numerical results are shown in Section \ref{sec:simulations}.

\section{Physical layer system model} \label{sec:system}

In order to present the fairness scheduling problem and the proposed solution, we first need to review a 
rather standard cell-free user-centric system as described, with minor variations, in most literature (e.g., see
\cite{ngo2017,nay2017,bjo2020,cell-free-fnt,miretti2022, goettsch2022}). 
In particular, here we follow the notation of our previous work \cite{goettsch2022}. 

The system operates in TDD mode with $L$ RUs, each equipped with $M$ antennas, 
and $\Ktot$ single-antenna UEs. Both RUs and UEs are distributed on a squared region on the 2-dimensional plane. 
Without loss of generality, we focus on a system subband\footnote{As in the example of Section~\ref{motivation}, 
$\Ktot$ can be thought as the total number of users per subband. In general, the system bandwidth may be 
an integer multiple of such subband. Hence, the total number of users in the system is a corresponding integer multiple of $\Ktot$.}
formed by $F$ frequency-domain RBs and assume 
the standard block fading model \cite{marzetta2010noncooperative}, for which  each RB is formed by $T$ 
time-frequency symbols over which the channel small-scale fading coefficients are constant and mutually independent 
for different RBs (in time and frequency) and different users.   
We let $\HH (t, f) \in \CC^{LM \times \Ktot}$ denote the overall channel matrix between all the $\Ktot$ UE antennas and all the $LM$ 
RU antennas on a given RB $f \in [F]$ in time slot $t$. The channel matrix is an $L \times \Ktot$ block matrix with 
$M \times 1$ blocks $\hv_{\ell,k}(t, f) \sim \Cc \Nc \left( \zerov, \Sigmam_{\ell,k} \right)$, each representing the 
channel vector between the $M$ antennas of RU $\ell$ and UE $k$.\footnote{Here, $\zerov$ indicates an all-zero
vector of appropriate dimension and $\Sigmam_{\ell,k} = \EE[ \hv_{\ell,k}(t, f) \hv^\herm_{\ell,k}(t, f)]$ is the 
$M \times M$ covariance matrix. Notice that the statistics of the channel vectors are 
independent of time and frequency by the well-known widely adopted 
wide-sense stationary assumption \cite{bjo2020,cell-free-fnt,miretti2022, goettsch2022}.}  
The random Gaussian vectors $\hv_{\ell,k} (t, f)$ are i.i.d. over different $t$ and $f$ and mutually independent (but not identically distributed) for different $\ell$ and $k$. 
For later use, we let $\beta_{\ell,k} = \frac{1}{M} \trace \left( \Sigmam_{\ell, k} \right)$ denote the large-scale fading coefficient (LSFC), and $\Fm_{\ell,k}$ denote  the tall unitary matrix spanning the channel dominant subspace, i.e., its columns are given by the (unit-norm) eigenvectors of $\Sigmam_{\ell, k}$ corresponding to the ``largest'' eigenvalues, as defined in \cite{adhikary2013joint} and made precise in Section~\ref{sec:simulations}, where we consider a particular antenna correlation model. 

In time slot $t$ and for all RBs $f \in [F]$, each UE $k$ is associated to its user-centric  
cluster $\Cc_k(t) \subseteq [L]$ of RUs. Consequently,  each 
RU $\ell$ is associated to a set of UEs $\Uc_\ell(t) \subseteq [\Ktot]$. The UE-RU association is described by 
a bipartite graph $\Gc(t)$ (e.g., see Fig.~\ref{fig-network}), 
which may evolve in time depending on the association scheme. 
The graph has two classes of nodes (UEs and RUs) such that the neighborhood of UE-node $k$ 
is $\Cc_k(t)$  and the neighborhood of RU-node $\ell$ is $\Uc_\ell(t)$. 
The set of edges of $\Gc(t)$ is denoted by $\Ec(t)$, i.e., $\Gc(t) = \Gc([L], [\Ktot], \Ec(t))$.

\subsection{Channel State Information} \label{sec:ch-est}

For some given scheduling policy (to be specified later), we let  $\Ac(t)\subseteq [\Ktot]$ denote the set of active users scheduled in slot $t$. 
At each time slot $t$, each RU $\ell$ obtains estimates $\widehat{\hv}_{\ell,k} (t, f)$ for all $k \in \Uc_\ell(t) \cap \Ac(t)$ 
and $f \in [F]$ from pilot sequences sent by the UEs in the UL.  
A codebook of $\tau_p$ {\em orthogonal} pilot sequences $\{\phiv_j : j \in [\tau_p]\}$ is used for channel estimation. 
This requires that $\tau_p$ signal dimensions per  block of $T$ symbols are used for UL pilots, yielding  
a SE penalty factor  $(1-\frac{\tau_p}{T})$. Pilot sequences are normalized such that $\|\phiv_j \|^2 = \tau_p \SNR$ 
for all $j \in [\tau_p]$, where the parameter $\SNR$ can be understood as the signal-to-noise ratio (SNR) at the transmitter, denoting the average transmit energy per time-frequency symbol normalized to the thermal noise power spectral density $N_0$. Each active user
$k \in \Ac(t)$ is given a pilot index $p_k(t)$ and transmits sequence $\phiv_{p_k(t)}$ over all $F$ RBs in frequency.  
The UL pilot field received at RU $\ell$ on RB $f$ in slot $t$ is given by the $M \times \tau_p$ matrix 
$\Ym_\ell^{\rm pilot} (t, f) = \sum_{i \in \Ac(t)} \hv_{\ell,i} (t, f) \phiv_{p_i(t)}^\herm + \Zm_\ell^{\rm pilot} (t, f) \label{Y_pilot}$,
where $\Zm_\ell^{\rm pilot} (t, f)$ is additive white Gaussian noise (AWGN) with elements i.i.d. $\sim \Cc\Nc(0, 1)$.

For each UE $k \in \Uc_\ell(t) \cap \Ac(t)$, RU $\ell$ employs {\em pilot matching} (by right-multiplication of the pilot field by 
$\phiv_{p_k(t)}$) and subspace projection on $\Fm_{\ell,k}$ to obtain the channel estimate \cite{goettsch2022}
\begin{align}
	\widehat{\hv}_{\ell,k} (t, f) &= \Fm_{\ell,k}\Fm_{\ell,k}^\herm \left( \frac{1}{\tau_p \SNR} \Ym^{\rm pilot}_\ell (t, f) \phiv_{p_k(t)} \right) \nonumber \\
	& =  \hv_{\ell,k} (t, f)  + \Fm_{\ell,k}\Fm_{\ell,k}^\herm \sum_{ i \in \Ac_{p_k}(t) } \hv_{\ell,i} (t, f)  \nonumber \\
	& \hspace{0.5cm} + \Fm_{\ell,k}\Fm_{\ell,k}^\herm  \widetilde{\zv}_{p_k,\ell} (t, f) ,  \label{chest1}
\end{align}
where $\Ac_{p_k}(t) = \{ i: i \in \Ac(t) \setminus k, i: p_i(t) = p_k(t) \}$, $\widetilde{\zv}_{p_k,\ell} (t, f)$ is $M \times 1$ Gaussian i.i.d. with components $\Cc\Nc(0, \frac{1}{\tau_p\SNR})$ and where the second term of the sum in  \eqref{chest1} is due to pilot contamination, i.e., it is the contribution of the active UEs transmitting the same pilot as UE $k$. The covariance matrix of the pilot contamination term is given by 
$
	\Sigmam_{\ell,k}^{\rm co}  = \sum_{i \in \Ac(t)\setminus\{k\} : p_i(t) = p_k(t)} \Fm_{\ell,k} \Fm_{\ell,k}^\herm \Sigmam_{\ell,i} \Fm_{\ell,k} \Fm^\herm_{\ell,k}$.
In particular,  when $\Fm_{\ell,k}$ and $\Fm_{\ell,i}$ are nearly mutually orthogonal, i.e. 
$\Fm_{\ell,k}^\herm \Fm_{\ell,i} \approx \zerov$, the subspace projection is able to significantly reduce the pilot contamination effect \cite{goettsch2022}. 
In most of the concurrent literature \cite{bjo2020,cell-free-fnt,miretti2022}, the channel statistics (in particular, the covariance matrices $\Sigmam_{\ell,k}$) are assumed to be known.
Schemes for channel subspace and covariance matrix estimation, respectively, in cell-free massive MIMO are presented in \cite{goettsch2022, 9715152}. In particular, the scheme in 
\cite{goettsch2022} is shown to achieve essentially the performance of ideal channel subspace knowledge.
Hence, for simplicity, in this work we assume that the subspace information $\Fm_{\ell,k}$ for all $\ell \in [L]$ and  
$k \in \Uc_\ell(t)$ is perfectly known, as justified by the results of \cite{goettsch2022}.

For convenience, we denote by $\widehat{\HH}(t, f) \in \CC^{LM \times \Ktot}$ the overall channel matrix estimated by the ensemble of the RUs. Notice that, beyond the estimation noise and pilot contamination, $\widehat{\HH}(t, f)$ differs from 
$\HH(t, f)$ by the fact that it has an $M \times 1$ all-zero block for all positions $(\ell,k)$ such that 
$k \notin \Ac(t)$ or $k \notin \Uc_\ell(t)$. This captures the fact that RU $\ell$ can obtain 
a channel estimate $\widehat{\hv}_{\ell,k} (t, f)$ only if UE $k$ is active (it is scheduled for transmission), and 
it is associated to RU $\ell$. Hence, even in the absence of estimation noise and pilot contamination, 
the RUs have a partial view of the overall channel state \cite{goettsch2022}. 

\subsection{Uplink Combining and Downlink Precoding}  \label{sec:precoding}

In this paper we consider the local linear MMSE detection and cluster-level combining scheme of  \cite{9593169,goettsch2022}.
RU $\ell$ locally computes a UL receiver combining vector $\vv_{\ell,k} (t, f)$ 
for each associated active UE $k \in \Uc_\ell(t) \cap \Ac(t)$ (see details in \cite{9593169,goettsch2022})
based on the available channel estimates, i.e., the $\ell$-th block row of $\widehat{\HH}(t, f)$.  
Using these local detection vectors, RU $\ell$ produces soft-output estimates of the time-frequency data symbols for each
user $k \in \Uc_\ell(t) \cap \Ac(t)$. The estimated symbols of user $k$ are sent via the fronthaul to the DU hosting the 
corresponding cluster processor. Such processor combines the signals from all RUs $\ell \in \Cc_k(t)$ to form the 
final received symbols for channel decoding. The cluster level combining coefficients, computed according to 
\cite{9593169,goettsch2022}, are denoted by $w_{\ell,k} (t, f)$. For convenience of notation, we define
the $LM \times 1$ dimensional unit-norm overall combining vector for user $k$ as
\begin{equation}
	\vvv_k (t, f)  = [ w_{1,k} (t, f) \vv^\transp_{1,k} (t, f) \ldots 
	w_{L,k} (t, f) \vv^\transp_{L,k} (t, f) ]^\transp , \nonumber
\end{equation}
where it is understood that $\vv_{\ell,k} (t, f) = \zerov$ for all $\ell \notin \Cc_k(t)$.

In the DL, we use the same vectors $\vvv_k (t, f)$ as downlink precoders with equal power allocation for all active user 
data streams.  The use of the UL combining vectors and DL precoding vectors is based on 
UL-DL duality, which holds approximately \cite{goettsch2022} or exactly \cite{bjo2020}, depending on which definition of achievable rate is used and the availability of channel and statistical information. 
In \cite{kddi_uldl_precoding}, we showed that using the UL combining vectors as DL precoding vectors 
with {\em uniform} power allocation over the DL data streams yields similar UL and DL rates. 
Hence, for the sake of simplicity, we use this method for the DL. We hasten to say that the scheduling approach 
developed in this paper can be applied to virtually any PHY and power allocation method, and the specific choice made here is
for convenience of exposition. 

\subsection{Data Transmission and Instantaneous Mutual Information} \label{sec:data-trans}

In the UL,  all active UEs transmit with the same average energy per symbol $P^{\rm ue} =  N_0 \SNR$.
The received $LM \times 1$ symbol vector at the $LM$ RU antennas for a single channel use on RB $f$ in slot $t$ of the UL is given by
\begin{equation} 
	\yy (t, f) = \sqrt{\SNR} \; \HH (t, f) \sss (t, f)   + \zz (t, f), \label{ULchannel}
\end{equation}
where $\sss (t, f) \in \CC^{K \times 1}$ is the vector
of information symbols transmitted by the UEs on RB $f$ in slot $t$ (zero-mean unit variance and mutually independent random variables) and  $\zz (t, f)$ is an i.i.d. noise vector with components $\sim \Cc\Nc(0,1)$.  
The cluster processor of user $k$ computes the estimate
$\hat{s}_k (t, f)  = \vvv_k (t, f)^\herm \yy (t, f)$ of the time-frequency symbol $s_k (t, f)$ of user $k$. 
Letting $\HH(t) \defines \{ \HH(t, f) : f \in [F] \}$ and $\vvv_k(t) \defines \{ \vvv_k(t, f) : f \in [F] \}$, 
the {\em instantaneous} mutual information between the transmitted symbol sequence
$\{ s_k(t,f) : f\in [F]\}$ and the detector soft-output sequence $\{ \hat{s}_k(t,f) : f\in [F]\}$ in slot $t$ (expressed in bits per time-frequency channel use, or bit/s/Hz) 
is a function of $\{ \vvv_k (t), \HH (t)\}$ and given by\footnote{This expression holds under the assumption that the user symbols are i.i.d. Gaussian and that the effective channel coefficients, i.e., the coefficients appearing in the numerator and denominator of the SINR expression in (\ref{UL-SINR-unitnorm}) are known at the receiver. Since these coefficients are constant over blocks of $T$ symbols, for simplicity we make such assumption here. A more detailed analysis would consider 
the relation between block error probability and rate of random codes in the non-coherent block-fading channel with input $s_k(t,f)$ and output $\hat{s}_k(t,f)$, where coefficients are constant over $F$ blocks of $T$ symbol each. While this might be possible using the techniques in  \cite{polyanskiy2010channel,yang2014quasi}, such information theoretic investigation goes well beyond the scope of this paper.}
\begin{gather}
	\Ic_k\left( \vvv_k (t), \HH (t) \right) \defines \frac{1}{F} \sum_{f = 1}^{F} \log\left( 1 + \SINR_k (t, f) \right), \label{eq:ul_mutual_inf}
\end{gather}
where
\begin{equation}
	\SINR_k (t, f) = \frac{  |\vvv_k (t, f)^\herm \hh_k (t, f) | ^2 }{ \SNR^{-1}  + \sum_{j \in \Ac(t): j \neq k} |\vvv_k (t, f)^\herm \hh_j (t, f) |^2 } \label{UL-SINR-unitnorm}
\end{equation}
and $\hh_k(t,f)$ is the $k$-th column of $\HH(t,f)$.
In the DL, with suitable normalization, an active UE $k$ receives 
\begin{equation} 
	y_k^{\rm dl} (t, f) = \hh_k (t, f)^\herm \xx (t, f) + z_k^{\rm dl} (t, f),  \label{DLchannel}
\end{equation}
where $z_k^{\rm dl} (t, f) \sim \Cc\Nc(0, \SNR^{-1})$ and where $\xx (t, f)  = \sum_{k \in \Ac(t)} \vvv_k(t,f) s^{\rm dl}_k(t,f)$ is
the $LM$-dimensional vector of precoded symbols transmitted collectively by the RUs,
with $s^{\rm dl}_k(t,f)$ denoting the (unit-variance) information symbol sent to UE $k$ at time slot $t$ and RB $f$.  
The DL SINR of user $k$ receiver is given by 
\begin{equation}
	\SINR^{\rm dl}_k (t, f) = \frac{|\hh_k (t, f)^\herm \vvv_k (t, f) |^2 }{\SNR^{-1} + \sum_{j \in \Ac(t) : j\neq k}   
	|\hh_k (t, f)^\herm \vvv_j (t, f) |^2},  \label{DL-SINR} 
\end{equation}
and the corresponding {\em instantaneous} mutual information is 
\begin{gather}
	\Ic_k^{\rm dl} \left( \VV (t), \hh_k (t) \right) \defines \frac{1}{F} \sum_{f = 1}^F \log\left( 1 + \SINR^{\rm dl}_k (t, f) \right),
	\label{eq:dl_mutual_inf}
\end{gather}
where we define the $ML \times K$ precoding matrix $\VV(t, f)$ with columns $\vvv_k(t,f)$ (all-zero columns for inactive users),
$\VV(t) = \{ \VV(t, f) : f \in [F]\}$ and $\hh_k(t) = \{\hh_k(t,f) : f\in [F]\}$. 
Since the precoding vectors have unit norm,  we have 
$\trace \left ( \EE [ \xx (t, f) \xx (t, f)^\herm ] \right ) =  |\Ac(t)| = \Kact$, where $|\cdot|$ denotes the cardinality of a set, i.e.,  the total transmit  power in the UL and DL are both equal to  $\Kact P^{\rm ue}$. 

\begin{remark} \label{inst-MI}
Expression \eqref{eq:ul_mutual_inf} (resp., \eqref{eq:dl_mutual_inf}) is referred to as ``instantaneous'' 
UL mutual information (resp., DL mutual information) because this is  the mutual information between 
symbols $s_k(t,f)$ (resp, $s_k^{\rm dl}(t,f)$) and the corresponding estimated symbols $\hat{s}_k(t,f)$ (resp.,  $y_k^{\rm dl}(t,f)$) conditional on the specific realization of
the (random) variables $\vvv_k (t), \HH (t)$ (resp., $\VV (t), \hh_k (t)$). This term is standard in the information theoretic literature on fading channels (e.g., see \cite{biglieri1998fading} and references therein), and should be distinguished from the standard {\em conditional} mutual information, which for the UL case would take the form
\begin{align}  
	&I \left( \{ \hat{s}_k(t,f) : f\in [F]\} ; \{ s_k(t,f) : f\in [F]\} |  \vvv_k (t), \HH (t)  \right) \nonumber  \\
		& = \EE \left [   \frac{1}{F} \sum\limits_{f = 1}^{F} \log\left( 1 + \SINR_k (t, f) \right) \right ].  \label{cond-MI}
\end{align}
While (\ref{cond-MI}) is a {\em deterministic quantity} that depends on the joint statistics 
of the true and estimated channels $\{ \HH (t,f), \widehat{\HH}(t,f) : f \in [F]\}$,  (\ref{eq:ul_mutual_inf}) and (\ref{eq:dl_mutual_inf}) are random variables, functions of the (random) instantaneous realization of 
$\{ \HH (t,f), \widehat{\HH}(t,f) : f \in [F]\}$. 
\hfill $\lozenge$
\end{remark}

\section{Fairness Scheduling} \label{sec:scheduling}

For convenience of exposition, we shall illustrate the scheduling problem for the UL (the application to the DL follows immediately). 
At each scheduling slot $t$, a scheduling policy must: 1) select a set of active users $\Ac(t)$; 2) select the 
coding rates $\rv(t) = \{r_k(t) : k \in \Ac(t)\}$ at which these users transmit their information. The system state 
in our case is $\{\Omegam(t)\}$ that denotes the mutual information statistics of each user available in time slot $t$ and depends on the statistics of $\{\HH (t)\}$, a stationary and ergodic matrix-valued Gaussian 
process as described in Section~\ref{sec:system}.  
A stationary scheduling policy $\gamma$ is a time-invariant function $\gamma : \Omega(t) \mapsto \{\Ac(t), \rv(t)\}$ \cite{georgiadis2006resource}, used to compute $\Ac(t)$ and $\rv(t)$ from $\Omegam(t)$. By definition, a stationary scheduling policy chooses the decision variables in slot $t$ based on the current network state (here the user mutual information statistics) and independent of the queues \cite{georgiadis2006resource}. Note that scheduling based on the mutual information statistics is a new approach compared to, e.g., \cite{georgiadis2006resource, shirani2010mimo}, where the current network state is assumed to be known or estimated, respectively.

We denote by $\Gamma$ the set of all feasible stationary policies for the system at hand, i.e., compliant with the PHY layer channel estimation, receiver/precoding vector calculation and Gaussian coding described in Section~\ref{sec:system}. 
In particular, driven by the discussion in Section~\ref{motivation} and exemplified by Fig.~\ref{fig:sumSEvsK}, we focus on the case where $\Ktot$ may be very large, and we consider policies operating in the ``good'' load regime such that $|\Ac(t)| \leq \Kact$, where  the maximum number of active users $\Kact$ is chosen to strike a good tradeoff between total system SE and per-user rates. In order to proceed, we introduce the following two key assumptions:

{\bf A1:} 
The rate allocation $r_k(t)$ to active user $k$ on slot $t$ is a function of the channel statistics but not of the 
instantaneous realization of $\{\vvv_k (t), \HH (t)\}$, which is known causally. \hfill $\lozenge$

{\bf A2:} 
For any user $k \in \Ac(t)$, the complementary CDF of the instantaneous mutual information
\begin{equation} 
P_k(r) \defines \PP( \Ic_k\left( \vvv_k (t), \HH (t) \right) > r ) \label{comp-CDF}
\end{equation}
is independent of the active user set $\Ac(t)$ but only on its size $\Kact = |\Ac(t)|$.
\hfill $\lozenge$

\begin{remark} \label{rem-assumptions}
Assumption A1 reflects the common practice in rate allocation in real-world systems \cite{3gpp38214, 3gpp38133}, where users are instructed to transmit a given Modulation and Coding Scheme (MCS) 
in a family spanning a wide range of coding rates on the basis of some ``local'' statistics accumulated in a time sliding window (e.g., RSSI, block error probability, in the past few time slots). The sliding window approach is used to track statistical changes, e.g., a user moving from a position close to a RU to a position farther away. This prevents slot by slot rate adaptation
depending on the instantaneous realization of the channel small scale fading states, which would be too fast to track and
too demanding in terms of protocol overhead, to signal to the receiver the used MCS. 
In practice (see Section~\ref{sec:algorithmic_practical_solutions}), since the 
analytical characterization of $P_k(r)$ is intractable, the scheduler uses empirical statistics collected over a window of time slots.

Assumption A2 is motivated by the fact that, for a large system with many randomly distributed UEs and RUs, the cumulative interference effect of all other active users on a given active user $k$ is approximately ``self-averaging''
and is weakly dependent on which individual active users are selected. This assumption is verified exactly in certain limiting conditions and symmetric situations as for example in massive MIMO multicell-networks (e.g., see \cite{marzetta2010noncooperative,hoydis2013massive,huh2012achieving,bethanabhotla2015optimal}).
For the system at hand, A2 holds only approximately, provided that UEs with strong mutual  pilot contamination are not scheduled together \cite{gottsch2022fairness}.\footnote{Our approach here to avoid strong pilot contamination consists of a conflict condition depending on the UL pilot and the spatial channel correlation (defined later in details) of users when they have RUs in common in their user-centric clusters. Although the proposed scheme is formulated to meet the characteristics of the considered directional channel model described in Section \ref{sec:simulations}, it can be tuned and adapted to various channel assumptions and pilot assignment approaches. Because we schedule only a subset of users and have finite size RU clusters, we do not run the risk of not finding a sufficient number of users that can be active at the same time.} 
For the sake of problem tractability, we shall develop our scheduling scheme under A2, and introduce a conflict graph constraint in the active user selection problem such that A2 is effectively (approximately) satisfied. \hfill $\lozenge$
\end{remark}

\subsection{Service Rate, Throughput Region, and Network Utility Maximization} \label{sec:region}

As already mentioned, because of the discontinuous user activity due to scheduling, coding
over a long sequence of scheduling slots is
impossible or impractical.\footnote{In particular,  this is in conflict with the low latency requirements typical of 5G systems \cite{itu2017requirements, 7529226}.}
With block-by-block coding/decoding, each codeword spans a single channel state $\HH(t)$. 
In this case, the block error rate of optimal codes for the effective Gaussian channel with input 
$s_k(t,f)$ and output $\hat{s}_k(t,f)$ as defined in Section~\ref{sec:data-trans} is well approximated by the 
so-called {\em information outage probability}, i.e., the probability that the instantaneous mutual information is less than the coding rate \cite{biglieri1998fading,shirani2010mimo}. We define the instantaneous {\em service rate} $\mu_k(t)$ of user $k$ as the number of information bits per s/Hz  (i.e., normalized by the block length $T F$ in channel uses) that are effectively delivered to the receiver in slot $t$.
This is given by 
\begin{eqnarray}
	\mu_k(t)  = \begin{cases}
		(1 - \frac{\tau_p}{T}) R_k(t) ,& \text{if} \; k \in \Ac(t), \\ 
		0 ,&  \text{if } k \notin \Ac(t), \end{cases} 	 \label{eq:allocated_rate}
\end{eqnarray}
where we define the random variable $R_k(t) \defines r_k(t) \times \mathbbm{1} \left\{ \Ic_k\left( \vvv_k (t), \HH (t) \right) > r_k(t) \right\} \label{eq:def_Rk}$,
and where $\mathbbm{1} \left\{ \Sc \right\}$ is the indicator function of an event $\Sc$. 
For a given stationary policy $\gamma \in \Gamma$, 
the per-user {\em throughput rate} is the long-term time-averaged service rate, i.e., 
\begin{eqnarray}
	\bar{\mu}_k = \lim_{t \rightarrow \infty} \frac{1}{t} \sum_{\tau=0}^{t-1} \mu_k(\tau) 
	= \EE \left[ \mu_k \left( \HH, \gamma \right) \right], \label{eq:ue_throughput}
\end{eqnarray}
where $\HH$ has the same marginal statistics of $\HH(t)$ and 
$\mu_k \left( \HH, \gamma \right)$ has the same marginal statistics of 
$\mu_k(t)$ in (\ref{eq:allocated_rate}).  
The convergence of the  time average to the ensemble expectation in 
\eqref{eq:ue_throughput} is with probability 1 due to the stationarity and ergodicity of the channel state and the
stationarity of the scheduling policy \cite{georgiadis2006resource}. 

A throughput rate vector $\bar{\muv} =  \left[ \bar{\mu}_1, \dots, \bar{\mu}_{\Ktot} \right]^\transp $ is feasible 
if there exists a scheduling policy $\gamma \in \Gamma$ such that $\bar{\mu}_k \leq \EE[ \mu_k(\HH, \gamma)]$ for all $k \in [\Ktot]$. Hence, the system throughput region is \cite{georgiadis2006resource}
\begin{equation}
 	\Rs = \text{coh} \bigcup_{\gamma \in \Gamma} \left\{ \bar{\muv} \in \RR_{+}^{\Ktot} : \bar{\mu}_k \leq \EE \left[ \mu_k \left( \HH, \gamma \right) \right] , \ \forall k \right\} , \label{eq:ergodic_rate_region}
\end{equation}
where ``coh'' denotes the closure of the convex hull.  
Two important properties of $\Rs$ are \cite{georgiadis2006resource}:
\begin{enumerate}
\item Queue stability region: given the system at hand, consider stationary and ergodic {\em exogenous} 
traffic arrival processes
$\{A_k(t) : k \in [\Ktot]\}$, such that $A_k(t)$ is the number of information bits per s/Hz arriving at the transmitter of 
UE $k$ in slot $t$,  with arrival rates $\lambda_k = \EE[ A_k(t)]$. Let each UE $k$ have a transmission queue $Q_k(t)$ that evolves according to the standard dynamic equation
$Q_k(t+1) = [Q_k(t)  - \mu_k(t)]_+ + A_k(t)$,
where for any $x \in \RR$ we define $[x]_+ := \max\{x, 0\}$.
An arrival rate vector $\lambdav = (\lambda_1, \ldots, \lambda_K) \in \Rs$ can be stabilized, i.e., there exists a stationary 
scheduling policy for which all system queues are {\em strongly stable} \cite{georgiadis2006resource}, 
 if and only if there exists a strictly non-negative vector $\epsilonv$ for which $\lambdav + \epsilonv \in \Rs$.
\item Sufficiency of stationary policies: $\Rs$ cannot be enlarged by non-stationary policies.
\end{enumerate}

\begin{remark}
\label{arrivals}
In this paper, we are concerned with scheduling for fairness, rather than scheduling for queue stability as discussed above. 
With a slight abuse of notation, in the following we will use $A_k(t)$ and $Q_k(t)$ to denote {\em virtual} arrival rates and {\em virtual} queues, instead of exogenous traffic arrivals and transmission queues. As anticipated in the introduction, the virtual arrival rates are computed according to the Lyapunov DDP approach in order to drive the system to operate sufficiently close to 
the throughput rate point (i.e., the point in $\Rc$) that maximizes the network utility function. The virtual queues do not correspond to
actual bits queued for transmission. Rather, they are iteratively updated weights for the scheduler to make its instantaneous 
scheduling decision by solving at each slot $t$ the WSRM problem.  \hfill $\lozenge$
\end{remark}

\begin{definition} {\bf NUM Problem.} \label{def_NUM}
Let $g(\cdot)$ denote a concave entry-wise non-decreasing function of per-user 
throughput rates $\bar{\muv}$, whose shape (concavity) captures a desired notion of fairness. The fairness scheduling problem 
consists of finding the scheduling policy solution of the NUM:\footnote{We would like to thank an anonymous reviewer who pointed out that the NUM problem can also be seen as maximizing the utility function of {\em admitted} user data rates subject to the condition that the transmission queues are stable, when the generated data rates are outside of the stable throughput region.}
\begin{eqnarray}
	& \underset{\gamma}{\text{maximize}} & g(\bar{\muv}) , \ \ \ \ \text{subject to } \bar{\muv} \in \Rs.  \label{eq:max_utility} 
\end{eqnarray}
\hfill $\lozenge$
\end{definition}

Since $\Rs$ is convex and compact, the solution of (\ref{eq:max_utility}) always exists and it is at some point on the Pareto boundary of $\Rs$. Letting $\bar{\muv}^\star$ denote such solution, there exists a scheduling policy $\gamma^\star$ that achieves $\bar{\muv}^\star$. Finding $\gamma^\star$ by directly solving  (\ref{eq:max_utility}) is generally 
impractical. In fact, despite (\ref{eq:max_utility}) being a convex optimization problem, the constraint region 
$\Rs$ is not generally characterized by a finite number of linear inequalities (see \cite{shirani2010mimo}).  
Fortunately, the NUM problem (\ref{eq:max_utility}) can be solved  to any desired degree of accuracy in an algorithmic way by using the {\em Lyapunov DPP} framework of \cite{georgiadis2006resource}. Instead of looking for a stationary policy, the DPP approach constructs a {\em dynamic policy} that yields a long-term average throughput point arbitrarily close to the  optimum $\bar{\muv}^\star$ under mild conditions (see the assumptions in Theorem~\ref{thm:performance_guarantees}). The next section develops such an algorithmic solution for the problem at hand.

\subsection{Dynamic Scheduling Policy} \label{sec:dynamic}

We start with the following somehow obvious lemma: 

\begin{lemma} {\bf Outage Rate Allocation.}  \label{lemma-rate}
For each $k \in [\Ktot]$ define
\begin{equation} 
r_k^*  = \argmax_{r \geq 0} \; r P_k(r). \label{inst-rate-allocation}
\end{equation}
and for a given active user set $\Ac(t)$ define the vector $\rv^*(t)$ with $k$-th component equal to $r_k^*$ if $k \in\Ac(t)$ and zero otherwise.  
For any stationary policy $\gamma : \Omegam(t) \mapsto (\Ac(t), \rv(t))$ yielding the throughput rate vector 
$\bar{\muv}(\gamma)$, the stationary policy $\gamma^*$ that coincides with $\gamma$ on the active user set 
$\Ac(t)$ but uses rates $\rv^*(t)$ in (\ref{inst-rate-allocation}) yields $\bar{\muv}(\gamma^*) \geq \bar{\muv}(\gamma)$, 
where the inequality holds componentwise (not necessarily strictly).
\hfill \qed
\end{lemma}

\begin{proof}
The throughput rate of user $k$ under $\gamma$ is given by 
\begin{align}
\bar{\mu}_k(\gamma) & = \left (1 - \frac{\tau_p}{T} \right ) \times \nonumber \\
& \hspace{.45cm} \EE \left [ 
r_k(t) \; \mathbbm{1} \left\{ \Ic_k\left( \vvv_k (t), \HH (t) \right) > r_k(t) \right \} \; \mathbbm{1} \left\{ k \in \Ac(t)\right \} \right ] \nonumber \\
& =  \left (1 - \frac{\tau_p}{T} \right ) \times r_k(t) P_k(r_k(t))  \; \PP( k \in \Ac(t)) , \label{zio}
\end{align}
where (\ref{zio}) follows from the fact that the distribution of $\Ic_k\left( \vvv_k (t), \HH (t) \right)$ does not depend on
$\Ac(t)$ (Assumption A2).
Since $\gamma$ and $\gamma^*$ coincide in $\Ac(t)$ and differ only in the rate allocation, 
replacing $r_k(t)$ with $r_k^*$ in (\ref{zio}) maximizes the throughput rate. 
\end{proof}

Lemma~\ref{lemma-rate} solves the outage rate allocation problem in the sense that we can restrict to policies 
that use the coding rate $r_k^*$ in (\ref{inst-rate-allocation}) for all active users. 
Next, using the theory in  \cite{georgiadis2006resource,shirani2010mimo}, we are ready to state the dynamic scheduling policy
and prove its properties under assumptions A1 and A2. 
We associate to each UE $k \in [\Ktot]$ a virtual queue $Q_k(t)$. This queue does not represent bits that arrive at UE $k$ transmitter and wait to be delivered to the receiver, since our system has full buffer and no random arrival. 
Instead, the values $\{Q_k(t) : k \in [\Ktot]\}$ are iteratively computed and yield the weights for a WSRM problem that the scheduler solves at every slot $t$ to determine the set of active users.  The queues can be initialized as $Q_k(0) = 0$ for all $k$. 

\begin{definition} {\bf Fairness dynamic scheduling.} \label{dynamic-scheduling}
Let $V > 0$, $A_{\max} > 0$ and $\Kact \geq 1$ be the parameters of the scheduling policy. 
For each $t = 0, 1, 2, \ldots$, the policy $\gammadyn$ iterates the following steps:
\begin{enumerate}
\item Virtual arrivals: let $A_k(t) = a_k$, where $\av = \left[a_1 \dots a_K\right]^\transp$ is the solution of the convex optimization problem
\begin{eqnarray} \label{arrival-rate}
	\begin{split}
	& \underset{\av}{\text{maximize}} & & V g(\av)  - \sum_{k \in [\Ktot]} a_k Q_k(t), \\ &\text{subject to } & &\av \in [0, A_{\max}]^{\Ktot} 
	\end{split} 
\end{eqnarray}
\item User selection via WSRM: let $\Ac(t)$ be the solution of 
\begin{eqnarray}  \label{user-selection} 
	\begin{split}
		& \underset{\Ac}{\text{maximize}} & & \sum_{k \in \Ac} Q_k(t) r_k^* P_k(r_k^*), \\ &\text{subject to}  & &\Ac \subseteq [\Ktot], \;\; |\Ac| \leq \Kact 
	\end{split}
\end{eqnarray}
\item Transmission: each user $k \in \Ac(t)$ transmits with rate $r_k^*$. The cluster receivers compute the receiving vectors from the estimated channels in $\widehat{\HH}(t)$ and attempt decoding. Let 
$\mu_k(t)$ denote the resulting service rate (see (\ref{eq:allocated_rate})). 
\item Virtual queue update: for all $k \in [\Ktot]$ compute the new virtual queue state as
\begin{equation} 
Q_k(t+1) = [Q_k(t)  - \mu_k(t)]_+ + A_k(t). \label{Qk-evo-virtual}
\end{equation}
\end{enumerate}
\hfill $\lozenge$
\end{definition}
The following result establishes the performance guarantee of   scheduling policy $\gammadyn$.
\begin{theorem} \label{thm:performance_guarantees}
	Suppose that channel states $\HH(t)$ are i.i.d. over timeslots. Under assumptions A1 and A2, we consider the scheduling policy $\gammadyn$ from Definition \ref{dynamic-scheduling} and constants $V>0$, $A_{\rm max} > 0$. We further assume 
	that there exists a point $\rv \in \Rs$ with strictly positive entries such that $g(\rv/2) > - \infty$. Then:
	\begin{itemize}
		\item[a)] The utility associated with the time average transmission rates achieved by $\gammadyn$ satisfies
		\begin{gather}
			\liminf_{t\rightarrow\infty} g \left( \frac{1}{t} \sum_{\tau = 0}^{t - 1} \EE \left[ \muv(\tau) \right] \right) \geq g( \bar{\muv}^\star ( A_{\rm max} ) ) - \frac{C}{V} ,
		\end{gather}
		where $\muv(\tau) = \left[ \mu_1(\tau), \dots, \mu_{\Ktot}(\tau) \right]^\transp $ and
		\begin{align}
			& C \defines \frac{\Ktot}{2} \left( A_{\rm max}^2 \right. \nonumber \\ 
			& \; \; \; \; \; \; \left. + \EE \left[ \left( \frac{1}{F} \sum_{f=1}^{F} \log \left( 1 + \frac{ | \hh_{k^\star}(t, f) |^2 }{\SNR^{-1}} \right) \right)^2 \right]  \right)
		\end{align}
		with $k^\star = \underset{ k \in [\Ktot] }{\arg\max} \ \frac{1}{F} \sum_{f=1}^{F} \log \left( 1 + \frac{ | \hh_k(t, f) |^2 }{\SNR^{-1}} \right)$, and where $\bar{\muv}^\star ( A_{\rm max} )$ is the solution of problem (\ref{eq:max_utility}) with the additional constraint $0 \leq \bar{\mu}_k \leq A_{\rm max}$ for all $k \in [K]$.
		\item[b)] For any point $\bar{\muv} \in \Rs$ such that $0 \leq \bar{\mu}_k \leq A_{\rm max}$ for all $k$, and for any value $\beta \in \left[ 0, 1 \right]$ we have
		\begin{align}
			& \limsup_{t\rightarrow\infty} \frac{1}{t} \sum_{\tau = 0}^{t - 1} \sum_{k=1}^{\Ktot} \bar{\mu}_k \EE\left[ Q_k(\tau) \right]  \nonumber \\
			& \; \; \; \leq \frac{C + V \left[ g(\bar{\muv}^\star ( A_{\rm max} ) )  - g( \beta \bar{\muv} ) \right]}{ 1 - \beta } .  \label{queue-bound}
		\end{align}
		Hence, all the virtual queues $Q_k(t)$ are strongly stable (see definition in \cite{georgiadis2006resource}).
	\end{itemize}
\end{theorem}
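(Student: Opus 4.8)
The plan is to follow the standard Lyapunov drift-plus-penalty (DPP) argument of \cite{georgiadis2006resource}, specialized so that steps \eqref{arrival-rate}--\eqref{user-selection} of $\gammadyn$ are recognized as the exact per-slot minimizers of a drift-plus-penalty bound. First I would introduce the quadratic Lyapunov function $L(t) = \frac{1}{2}\sum_{k\in[\Ktot]} Q_k(t)^2$ and its conditional drift $\Delta(t) \defines \EE[L(t+1)-L(t)\mid\{Q_k(t)\}]$. From the virtual-queue recursion \eqref{Qk-evo-virtual} together with the elementary inequality $([x]_+)^2 \le x^2$, I would derive the one-slot bound $\Delta(t) \le C + \sum_k Q_k(t)\,\EE[A_k(t)-\mu_k(t)\mid\{Q_k(t)\}]$. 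The constant $C$ arises from the second-moment terms $\frac{1}{2}\sum_k\EE[A_k(t)^2+\mu_k(t)^2]$: the arrival term is bounded by $A_{\max}^2$ since $a_k\in[0,A_{\max}]$, and the service term is bounded using $\mu_k(t)\le \Ic_k(\vvv_k(t),\HH(t))$, the Cauchy--Schwarz bound $|\vvv_k^\herm\hh_k|^2\le|\hh_k|^2$ for unit-norm $\vvv_k$, and discarding the interference in the denominator, which produces exactly the $k^\star$ expression in the definition of $C$.

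Next I would form the drift-plus-penalty quantity $\Delta(t) - V\,\EE[g(\av(t))\mid\{Q_k(t)\}]$ and observe that the two subproblems of $\gammadyn$ minimize its upper bound term by term. The arrival update \eqref{arrival-rate} minimizes $\sum_k a_k Q_k(t) - V g(\av)$ over the box $[0,A_{\max}]^{\Ktot}$, while---crucially invoking Assumption A2---the conditional expected service rate of an active user equals $(1-\tfrac{\tau_p}{T})r_k^*P_k(r_k^*)$ independently of the chosen set, so the WSRM \eqref{user-selection} maximizes $\sum_k Q_k(t)\,\EE[\mu_k(t)]$. Since any stationary randomized policy is a convex combination of feasible decisions, the bound attained by $\gammadyn$ is no larger than that of an arbitrary stationary policy $\gamma'$, yielding the master inequality $\Delta(t)-V\EE[g(\av(t))\mid\{Q_k(t)\}] \le C - V\EE[g(\av')] + \sum_k Q_k(t)(\EE[A_k']-\EE[\mu_k'])$.

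Part (a) then follows by plugging in a comparison policy $\gamma^\star$ that sets both arrivals and expected service rates equal to $\bar{\muv}^\star(A_{\max})$ (its existence guaranteed by convexity and closedness of $\Rs$, up to vanishing slack handled by a limiting argument), which annihilates the queue-weighted term and gives $\Delta(t)-V\EE[g(\av(t))\mid\{Q_k(t)\}]\le C - Vg(\bar{\muv}^\star(A_{\max}))$. Telescoping over $\tau=0,\dots,t-1$, using $L(0)=0$ and $L(t)\ge0$, dividing by $t$, and applying Jensen's inequality to the concave $g$ converts the running average of $\EE[g(\av(\tau))]$ into the stated bound, after relating average arrivals to average service via the telescoped recursion $\sum_\tau\EE[\mu_k(\tau)]\ge\sum_\tau\EE[A_k(\tau)]-\EE[Q_k(t)]$ and the monotonicity of $g$. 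For part (b) I would instead use the policy with arrivals $\beta\bar{\muv}$ and expected service $\bar{\muv}$ (feasible since $\beta\bar\mu_k\le A_{\max}$ and $\bar{\muv}\in\Rs$), producing the term $-(1-\beta)\sum_k\bar\mu_k Q_k(t)$; rearranging, telescoping, discarding $\EE[L(t)]\ge0$, and upper-bounding $\frac{1}{t}\sum_\tau\EE[g(\av(\tau))]$ by $g(\bar{\muv}^\star(A_{\max}))$ gives \eqref{queue-bound}, and the resulting boundedness of the weighted queue sum for a strictly positive $\bar{\muv}$ (supplied by the assumed interior point) yields strong stability.

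The main obstacle I anticipate is structural rather than computational and is twofold: (i) justifying that the greedy per-slot decisions of $\gammadyn$ genuinely minimize the DPP bound over all stationary randomized policies, which rests entirely on Assumption A2 decoupling $\EE[\mu_k(t)]$ from the identity of $\Ac(t)$---without A2 the WSRM \eqref{user-selection} would not coincide with the drift minimizer; and (ii) controlling the free arrival-utility term $\EE[g(\av(\tau))]$ in part (b), where I must show its running average cannot exceed $g(\bar{\muv}^\star(A_{\max}))$. I would establish this last step by Jensen's inequality followed by the observation that the average arrival vector lies in $\Rs\cap[0,A_{\max}]^{\Ktot}$ (using downward-closedness of $\Rs$ and mean-rate stability of the queues), so that its utility is dominated by the constrained optimum $g(\bar{\muv}^\star(A_{\max}))$.
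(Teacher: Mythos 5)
Your proposal follows essentially the same route as the paper's proof: the quadratic Lyapunov function $\Lc(\Qm)=\frac{1}{2}\sum_k Q_k^2$, the drift bound obtained from the virtual-queue recursion, the same constant $C$ (bounding $\mu_k(t)\le \Ic_k$, Cauchy--Schwarz with unit-norm $\vvv_k$, dropping interference, and majorizing by the $k^\star$ term), the comparison of $\gammadyn$ against arbitrary stationary policies (your ``master inequality'' bundles into one step what the paper splits into Lemma~\ref{lemma:queues_R} for the service term and the box-constrained minimization for the arrival term), the telescoping-plus-Jensen step, and the use of the interior point $\rv$ with $\beta=\tfrac{1}{2}$ for stability. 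So the ingredients are the right ones.

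However, as written your argument is circular in how stability is established, and this is a genuine gap. Both of your concluding steps rely on transferring the arrival utility to the service utility: in part (a) you drop $\EE[Q_k(t)]/t$ from the telescoped recursion, and in part (b) you upper-bound $\frac{1}{t}\sum_\tau \EE[g(\av(\tau))]$ by $g(\bar{\muv}^\star(A_{\rm max}))$; by your own account (obstacle (ii)) both transfers require mean-rate stability $\EE[Q_k(t)]/t\rightarrow 0$. But you only conclude strong stability (hence mean-rate stability) \emph{from} the bound \eqref{queue-bound}, whose derivation already used that transfer. The missing device is a stability argument that needs no such transfer: since $\av(\tau)\in[0,A_{\rm max}]^{\Ktot}$ and $g$ is componentwise non-decreasing, one has the crude pointwise bound $g(\av(\tau))\le g(\Am_{\rm max})$ where $\Am_{\rm max}$ is the all-$A_{\rm max}$ vector. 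Plugging this into the pre-limit inequality (before any appeal to stability), and choosing the comparison points $\tilde{\zv}=\rv$ and $\beta=\tfrac{1}{2}$ so that $g(\rv/2)>-\infty$, yields a finite right-hand side and hence strong stability of all virtual queues \emph{first}. Only then may you invoke mean-rate stability to replace $g(\Am_{\rm max})$ by the sharper $g(\bar{\muv}^\star(A_{\rm max}))$ in part (b) and to drop $\EE[\Qm(t)]/t$ in part (a). This is precisely how the paper sequences its proof (part (b) stability via the crude bound, then the refinements), and your write-up needs the same reordering to be logically sound.
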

{\em \ \ \ Proof:} See Appendix~\ref{app:proofs}. \hfill $\blacksquare$

Theorem~\ref{thm:performance_guarantees} implies that if $A_{\rm max}$ is sufficiently large, such that $A_{\rm max} \geq \bar{\mu}_k^\star$ for all $k$, then
$\liminf_{t\rightarrow\infty} g \left( \frac{1}{t} \sum_{\tau=0}^{t-1} \EE[ \muv(\tau) ] \right) \geq g\left( \bar{\muv}^\star \right)- \frac{C}{V}$.
Hence, by choosing a sufficiently large $V$, $\gammadyn$ can approach $g\left( \bar{\muv}^\star \right)$ as closely as desired
with convergence time that scales as $O(V)$ due to (\ref{queue-bound}). In other words, using the proposed scheduler in Definition \ref{dynamic-scheduling}, we can approximate the optimal solution of the NUM posed in \eqref{eq:max_utility} by increasing $V$ (such that the performance gap decreases as $1/V$) until the converged resulting network utility approaches asymptotically its maximum. 
As a matter of fact, when the NUM problem does not admit a direct solution, the dynamic policy itself can be used as an iterative algorithm to approximate the NUM solution numerically.

\section{Algorithmic and Practical Aspects} \label{sec:algorithmic_practical_solutions}

As in the previous section, we focus on the UL. The algorithms can easily be applied to the DL case by 
using the DL instantaneous mutual information \eqref{eq:dl_mutual_inf} and its CDF in place of its UL counterpart.
For assumption A2 to hold, we need to guarantee that the pilot contamination affecting UE $k$ is negligible. In fact, 
a strong pilot contamination severely affects the quality of the channel estimates \eqref{chest1} and this affects both the useful signal term and the interference term in the SINR expression (\ref{UL-SINR-unitnorm}). 
Severe pilot contamination occurs if users $k, k' \in \Ac(t)$ share at least one RU in their user-centric cluster, 
have the same UL pilot, and their channel subspaces with respect to the shared RUs are strongly aligned, i.e., far from 
mutually orthogonal.  

We propose to prevent strong pilot contamination between active users 
by introducing a conflict graph that constrains the scheduling decisions. 
Note that it is generally not advisable to schedule users with strong mutual interference. Hence, although this is a heuristic approach, it is expected that the conflict graph does not limit significantly the space of {\em good policies}.
Interestingly, by introducing the conflict graph as a constraint in 
the general WSRM (\ref{user-selection}), the problem becomes a linear integer program. 

In particular, we define the conflict graph $\Cs = ([\Ktot], \Ec_\Cs) $ with a vertex set corresponding to all $\Ktot$ UEs in the network and an edge set $\Ec_\Cs$ accounting for the conflicts.
A UE-pair $(k, k')$ is in conflict on slot $t$ if the following three conditions are satisfied:
\begin{enumerate}
	\item the UEs are associated to at least one common RU, i.e., $\Cc_{k, k'}(t) \defines \Cc_k(t) \cap \Cc_{k'}(t) \neq \emptyset$; 
	\item the UEs are assigned the same UL pilot, i.e., $p_k(t) = p_{k'}(t)$;
	\item the channels of the UEs with respect to at least one RU $\ell \in \Cc_{k, k'}(t)$ are strongly aligned, i.e., 
		\begin{equation}
			\lVert \Fm_{\ell,k}^\herm \Fm_{\ell,k'}  \rVert_F > \eta_\Fm ,  \;\;\; \mbox{for some} \;\; \ell \in \Cc_{k, k'}(t), 
			\label{eq:subspace_conflict}
		\end{equation}
		where $\eta_\Fm$ is a threshold for ``non-orthogonality'' and $ \lVert \cdot \rVert_F $ denotes the Frobenius norm.
\end{enumerate}
The graph $\Cs$ has an edge between the vertex $k$ and vertex $k'$ for all UE-pairs $(k, k')$ in conflict.
We consider two options for the assignment of pilots and clusters, fixed and dynamic.
With the ``fixed assignment'' scheme, pilots and clusters are assigned to each UE $k \in [\Ktot]$ and kept fixed for all time slots. 
With the dynamic ``pilot reassignment'' scheme, clusters are fixed based on the LSFCs, but pilot allocation is carried out in 
each time slot after making the active user selection.  Of course, it is expected that by reassigning pilots to the active users on each slot, 
better throughput rate performance can be achieved at the cost of a higher control signaling overhead. In contrast, 
the fixed assignment requires less control signaling overhead. While a precise quantitative analysis of the signaling overhead is
out of the scope of this paper, it is nevertheless interesting to compare the two approaches in terms of 
the achieved user throughput rate. 

\subsection{Fixed Pilots and Clusters}

Based on the conflict definition, we propose the following fixed pilot assignment and cluster formation scheme:
\begin{enumerate}
	\item When a UE $k$ joins the system, it connects to a maximum of $\Cc_{\rm max}$ RUs with the largest LSFCs, provided that $\beta_{\ell,k} \geq \frac{\eta}{M \SNR } $, forming the set $\Cc_k$.\footnote{Notice that the maximum beamforming gain of an RU array with $M$ antennas is equal to $M$. Hence, this condition imposes that the 
	SNR at UE $k$ for the signal from RU $\ell$ with best-case beamforming gain is $M \beta_{\ell,k} \SNR \geq \eta$, for some suitably chosen association threshold $\eta$.}
	\item Each RU $\ell \in \Cc_k$ considers all pilot indices $p = [\tau_p ]$. 
	If user $k$ is given pilot $p$, the set of UEs $k' \neq k$ conflicting with $k$ is given by 
	$\Cs_k (p) = \left\{ \bigcup_{\ell \in \Cc_k} \Cs_{\ell, k} (p) \right\}$, where 
	\begin{eqnarray}
			\Cs_{\ell, k} (p) = \left\{ k' \in \Uc_\ell :  p_{k'} = p,  \lVert \Fm_{\ell,k}^\herm \Fm_{\ell,k'}  \rVert_F > \eta_\Fm  \right\}.  \label{eq:conflicting_ues}
	\end{eqnarray}
In fact, these are all the users having at least one RU in common with UE $k$, and aligned channel subspaces in the sense of condition \eqref{eq:subspace_conflict}. 
	\item Then, UE $k$ is assigned the pilot $p_k =  \underset{p \in [\tau_p]}{\arg \min} \ | \Cs_k (p) | $ (if the minimizer is not unique, an arbitrary choice in the minimizing set is made). 	
\end{enumerate}
Letting for simplicity of notation $\bar{R}_k \defines r_k^* P_k(r_k^*)$, and defining a binary vector $\xv \in \{0,1\}^\Ktot$ such that the $k$-th entry of $\xv$, i.e., $x_k$, is equal to $1$ if $k \in \Ac(t)$ and $0$  if $k \notin \Ac(t)$,
the WSRM problem (\ref{user-selection}) subject to the conflict graph reduces to the linear integer program 
\begin{equation}
	\begin{array}{l l}
		\underset{\xv}{\text{maximize}} & \ \ \sum_{k \in [\Ktot]} Q_k(t) \bar{R}_k x_k  \\
		\text{subject to} & \ \ \sum_{k \in [\Ktot]} x_k \leq \Kact ,  \\
		& \ \ x_k \in \left\{ 0, 1 \right\}, \\
		& \ \ x_k + x_{k'} \leq 1, \ \forall (k, k') \in \Ec_\Cs. \label{eq:sched_conflict_graph}
	\end{array}
\end{equation}
Problem \eqref{eq:sched_conflict_graph} can be efficiently computed using standard solvers (e.g., Gurobi or Matlab), even for fairly large systems.

\subsection{Pilot Reassignment Scheme}

As for the fixed assignment scheme, each UE $k$ connects to a maximum of $\Cc_{\rm max}$
RUs with the largest LSFCs, provided that $\beta_{\ell,k} \geq \frac{\eta}{M \SNR } $, forming the set $\Cc_k$ for all time slots. 
The pilots are reassigned only to the set of active users on each time slot. 
However, in this case we have a classical ``chicken and egg problem'': on one hand, 
the set of active users must be determined in order to assign the pilots; 
on the other hand, the determination of the active user set depends on the pilot assignment.
Our pragmatic solution consists of first running an active user pre-selection by solving the unconstrained WSRM problem, 
then assigning pilots on the pre-selected users and determining the conflict graph, and finally solving the WSRM problem 
with conflict graph constraint on the restricted set of pre-selected users. 
For some $\widetilde{K} \geq \Kact$, the user pre-selection finds the set of $\widetilde{K}$ users maximizing (\ref{user-selection}) without conflict constraints, i.e., it solves 
\begin{equation}
	\begin{array}{l l}
		\underset{\xv}{\rm maximize} & \ \ \sum_{k \in [\Ktot]} Q_k(t) \bar{R}_k x_k  \\
		\text{subject to} & \ \ \sum_{k \in [\Ktot]} x_k \leq \widetilde{K},  \\
		& \ \ x_k \in \left\{ 0, 1 \right\} . \label{eq:sched_asilomar}
	\end{array}
\end{equation}
The solution of \eqref{eq:sched_asilomar} 
is immediate and consists of sorting the users in decreasing order of the product  $Q_k(t) \bar{R}_k$ and 
selecting the top $\widetilde{K}$ sorted users. We denote such set as $\widetilde{\Ac}(t)$. 
Then, following steps 2) and 3) of the fixed pilots scheme, pilots are assigned to the users 
$k \in \widetilde{\Ac}(t)$ and the corresponding conflict graph $\Cs(\widetilde{\Ac}(t), \Ec_\Cs)$ is constructed. 
Finally, the set of active users $\Ac(t)\subseteq \widetilde{\Ac}(t)$ is the solution of 
\begin{equation}
	\begin{array}{l l}
		\underset{\xv}{\text{maximize}} & \ \ \sum_{k \in \widetilde{\Ac}(t)} Q_k(t) \bar{R}_k x_k  \\
		\text{subject to} &  \ \ \sum_{k \in \widetilde{\Ac}(t)} x_k \leq \Kact,  \\
		& \ \ x_k \in \left\{ 0, 1 \right\} , \ \forall k \in \widetilde{\Ac}(t),  \\
		& \ \ x_k = 0 , \ \forall k \notin \widetilde{\Ac}(t) , \\
		& \ \ x_k + x_{k'} \leq 1 , \ \forall (k, k') \in \Ec_\Cs. \label{eq:sched_conflict_graph_dynamic}
	\end{array}
\end{equation}

\subsection{Proportional Fairness and Hard Fairness Scheduling}

As fairness criteria in this paper we consider the very well-known PF and HF, implemented using the general Lyapunov DPP framework
by selecting the appropriate network utility functions in (\ref{eq:max_utility}). 
In case of PF scheduling (PFS), the network utility function is given by 
$g(\av) = \sum_{k \in [\Ktot]} \log a_k$. The corresponding solution of (\ref{arrival-rate}) is given by \cite{shirani2010mimo}
\begin{equation}
	a_k = \min \left\{ \frac{V}{Q_k(t)}, A_{\rm max} \right\} . \label{eq:a_k_pfs}
\end{equation}
In case of HF scheduling (HFS), the network utility function is given by $g(\av) = \underset{k \in [\Ktot]}{\min} \ a_k$ 
and the corresponding solution to (\ref{arrival-rate}) is given by \cite{shirani2010mimo}
\begin{align}
	a_k = \begin{cases}  A_{\rm max} , & \text{if } V > \sum_{k \in [\Ktot]} Q_k(t), \\ 
		0 , & \text{else. }  \end{cases} \label{eq:a_k_hf}
\end{align}

\subsection{Mutual Information Statistics}

In order to compute the instantaneous rate $r_k^*$ according to \eqref{inst-rate-allocation}, the mutual information CDF 
$P_k(r)$ defined in (\ref{comp-CDF}) is needed. Unfortunately, a closed-form expression for this CDF is intractable for the system at hand, since it is very difficult to compute the instantaneous mutual information even for much simpler systems. For co-located massive MIMO, the SINR ``hardens to a deterministic limit'' when the number of antennas and the number of users grow to infinity at a certain fixed ratio, which can sometimes be computed in closed form using results from random matrix theory (see, e.g., \cite{zhang2021local, 8379438, huh2011multi}). These results however are extremely delicate with respect to the model assumptions and hard to generalize when these assumptions are not satisfied.
 
Therefore, we propose an adaptive approach where each user $k$ 
accumulates samples of the instantaneous  mutual information in a sliding window of $N$ past time slots where the user is active.\footnote{In practice, this can be done at the network infrastructure side, and the rate allocation decision can be communicated together with the scheduling decision to each user through the control information in each scheduling slot, as currently specified in the 3GPP 5GNR standard \cite{3gpp38211}.} 
Based on Assumption A2 and the strong law of large numbers, the empirical CDF of $\Ic_k\left( \vvv_k (t), \HH (t) \right)$ constructed from the $N$ samples converges to the true CDF as $N \rightarrow \infty$, and thus is a suitable approximation to compute a meaningful $r_k^*$ with (\ref{inst-rate-allocation}).\footnote{As in 
\cite{gottsch2022fairness}, the allocated rates are initialized by a ``start-up'' phase consisting of $N_{\rm init}$ time slots. In each of the $N_{\rm init}$ time slots $\Kact$ out of the $\Ktot$ UEs are randomly selected considering the conflict graph to 
be active. In practice, a user joining the system would start with a very conservative rate and progressively ``ramp up'' 
the value of $r_k$ until the maximum of the product in (\ref{inst-rate-allocation}) is achieved. 
Actual practical algorithms for rate scheduling work on averaged local statistics along these lines.} 
In stationary conditions, the instantaneous mutual information distribution is independent of the slot time $t$. In practice, with moderate user mobility, the statistics change slowly over time. 
Although not investigated in this paper, the proposed method can track non-stationary (slowly varying) statistics and in fact it is very reminiscent of practical rate allocation schemes based on some time-averaged ``channel quality indicator'' (the role of which, in our case, is represented by the instantaneous mutual information CDF). 

\begin{table}[t] 
	\begin{center}
		\begin{tabular}{ | l | p{6.6cm} |}
			\hline
			{\textbf{Parameter}} & {\textbf{Description}} \\ \hline
			$A_{\rm max}$, $V$ & Algorithm parameters \\ \hline
			$\widetilde{\Ac}(t)$, $\Ac(t)$ & Preliminary and final set of active UEs \\ \hline
			$A_k(t)$ & Virtual arrival rate \\ \hline
			$\Cs$ & Conflict graph \\ \hline
			$\Cc_k(t)$ & User-centric cluster of RUs serving UE $k$ \\ \hline
			$F$ & RBs per subchannel \\ \hline
			$\Fm_{\ell,k}$ & Subspace matrix of channel $\hv_{\ell,k}$ \\ \hline
			$g(\cdot)$ & Fairness utility function \\ \hline
			$\gamma$ & Stationary scheduling policy \\ \hline
			$\Gamma$ & Set of all feasible stationary policies \\ \hline
			$\hat{\gamma}$ & Fairness dynamic scheduling policy \\ \hline
			$\Kact$ & Number of active UEs \\ \hline
			$\widetilde{K}$ & Number of preliminary active UEs \\ \hline
			$\Ktot, L, M$ & Number of UEs, RUs, and RU antennas \\ \hline			$\mu_k(t)$ & Instantaneous service rate \\ \hline
			$\bar{\mu}_k(t)$ & Long-term throughput rate  \\ \hline
			$\widetilde{\mu}_k(t)$ & Long-term throughput rate in bit/s  \\ \hline
			$\Omegam(t)$ & Mutual information statistics of all users  \\ \hline
			$p_k(t)$ & UL pilot index of UE $k$  \\ \hline
			$P_k(r)$ & Complementary CDF of the instantaneous mutual information \\ \hline
			$Q_k(t)$ & Virtual queue state \\ \hline
			$\Rc$ & System throughput region \\ \hline
			$r_k(t)$ & Allocated rate to an active user \\ \hline
			$r_k^*$ & Allocated rate that maximizes the expected service rate \\ \hline
			$R_k(t)$ & Allocated rate to an active user under information outage \\ \hline
			$\Uc_\ell(t)$ & Set of users served by RU $\ell$ \\ \hline
		\end{tabular} 
		\label{list_simulation_params}
	\end{center}
	\caption{Parameters of the scheduling problem.}
\end{table}

\section{Numerical Results} \label{sec:simulations}

We consider a cell-free network spanning an area of $A=200 \times 200\text{m}^2$ with a torus topology to avoid boundary effects, containing $L=20$ RUs, each with $M=10$ antennas. 
A bandwidth of $W=60$ MHz and noise power spectral density of $N_0 = -174$ dBm/Hz is assumed. 
For cluster formation we have chosen SNR threshold $\eta=1$ and maximum cluster size $\Cc_{\rm max}=7$.
The LSFC statistics follow the 3GPP urban microcell street canyon pathloss model from \cite[Table 7.4.1-1]{3gpp38901}, which differentiates between UEs in line-of-sight (LOS) and non-LOS (NLOS). The probability of LOS is distance-dependent and given in \cite[Table 7.4.2-1]{3gpp38901}. 
A log-normal Gaussian random variable with different parameters for LOS and NLOS is added to the deterministic 
distance dependent term to account for shadow fading.
The UL energy per symbol is chosen such that $\bar{\beta} M \SNR = 1$ (i.e., 0 dB), when the expected 
LSFC $\bar{\beta}$ is calculated for the considered statistics at 
distance $2.5 d_L$, where $d_L = \sqrt{\frac{A}{\pi L}}$ is the radius of a disk of area equal to $A/L$. 
This leads to a certain level of overlap of the RUs' coverage areas considering the RU-UE association threshold, such that each UE is likely to be associated to several RUs. The UEs are randomly dropped in the network area, while the RUs are placed on a $4\times 5$ rectangular grid. For the rate adaptation scheme we run an initialization phase with 
$N_{\rm init} = 500$ and construct for each user the instantaneous mutual information empirical 
CDF with $N = 100$ samples.  We use RBs of dimension $T = 200$ symbols and UL pilots of dimension $\tau_p = 20$, yielding a SE penalty factor of $(1 - \frac{\tau_p}{T}) = 0.9$.

For the spatial correlation between the channel antenna coefficients, we consider a simple directional channel model defined as follows.  Consider the angular support $\Theta_{\ell,k} = [\theta_{\ell,k} - \Delta/2, \theta_{\ell,k} + \Delta/2]$ centered at angle $\theta_{\ell,k}$ of the LOS 
between RU $\ell$ and UE $k$ (with respect to the RU boresight direction), with angular spread $\Delta$. 
Let $\Fm$ denote the $M \times M$ unitary DFT matrix with $(m,n)$-elements
$\left[ \Fm \right]_{m,n} = \frac{e^{-j\frac{2\pi}{M} mn}}{\sqrt{M}}$ for  $m, n  = 0,1,\ldots, M-1$, and consider the angular support set $\Sc_{\ell,k} \subseteq \{0,\ldots, M-1\}$ 
obtained according to the single ring local scattering model \cite{adhikary2013joint}, where $\Sc_{\ell,k}$ contains the DFT quantized angles (multiples of $2\pi/M$) falling inside an interval of length $\Delta = \pi/8$ placed symmetrically around the direction joining UE $k$ and RU $\ell$. Then, the channel between RU $\ell$ and UE $k$ on RB $f$ in slot $t$ is
$\hv_{\ell,k} (t, f) = \sqrt{\frac{\beta_{\ell,k} M}{|\Sc_{\ell,k}|}}  \Fm_{\ell,k} \nuv_{\ell, k} (t, f)$,
where using a MATLAB-like notation $\Fm_{\ell,k} \eqdef \Fm(: , \Sc_{\ell,k})$ denotes the tall unitary matrix obtained by selecting the columns 
of $\Fm$ corresponding to the index set $\Sc_{\ell,k}$,\footnote{Note that for uniform linear arrays (ULAs) and uniform planar arrays (UPAs), as widely used in today's massive MIMO implementations, the channel covariance matrix is Toeplitz (for ULA) or Block-Toeplitz (for UPA), and that large Toeplitz and block-Toeplitz matrices are approximately diagonalized by DFTs on the columns and on the rows (see \cite{adhikary2013joint} for a precise statement based on Szeg\"o's theorem).} 
and $\nuv_{\ell,k} (t, f)$ is an $|\Sc_{\ell,k}| \times 1$ i.i.d. Gaussian vector with components 
$\sim \Cc\Nc(0,1)$. The corresponding covariance matrix is $\Sigma_{\ell,k} = \frac{\beta_{\ell,k}M}{\left| \Sc_{\ell,k} \right|} \Fm_{\ell,k} \Fm_{\ell,k}^\herm$. 
For the definition of the conflict graph (see (\ref{eq:subspace_conflict})) we chose the 
``non-orthogonality'' threshold $\eta_\Fm = 0$. Hence,  the ``non-orthogonality'' condition $\lVert \Fm_{\ell,k}^\herm \Fm_{\ell,k'}  \rVert_F >  0$ can be stated equivalently as $ \left| \Sc_{\ell,k} \cap \Sc_{\ell,k'} \right| > 0 $. The parameters of the scheduling policy are chosen as $V=5,000$ and $A_{\rm max} = 100$. These choices have been empirically found to yield approximately ``near-optimal'' throughput rates, as explained below Theorem \ref{thm:performance_guarantees}. With larger values, no significant improvement was achieved, while reducing $V$ and $A_{\rm max}$ led to noticeably worse results.

In our simulations, we considered a total number of $K_{\rm tot} = 10,000$ users, which is representative of a dense area such as a sports stadium (see motivation in Section~\ref{intro}) to evaluate the described methods, i.e., a network where each frequency domain RB consists of $12$ subcarriers with subcarrier spacing of $60$ kHz, 
such that the bandwidth of each RB is $W_{\rm RB} = 720$ kHz. 
We divide the system bandwidth into $\lfloor W / (F W_{\rm RB}) \rfloor$ ``subchannels'' in frequency, 
each spanning $F$ RBs. The $K_{\rm tot}$ users are distributed among the different subchannels, such that on each subchannel $\Ktot = K_{\rm tot} \frac{F W_{\rm RB}}{W}$ users shall be served. 
Out of the $\Ktot$ users though, only a fraction $\Kact \approx LM/2$ users are scheduled simultaneously in order to operate the network at a reasonable user load.
Hence, by increasing $F$, the number of subchannels decreases and the number of users per subchannel
increases. This means that the users are scheduled less frequently, but when they are served, they transmit at higher rate (in bit/s) since the subchannel bandwidth also grows with $F$. In addition, larger $F$ yields larger frequency diversity, i.e., 
the CDF of the instantaneous mutual information ``concentrates'' due to the averaging in the frequency domain 
(see \eqref{eq:ul_mutual_inf} and \eqref{eq:dl_mutual_inf}). 
Since different values of $F$ yield different subchannel bandwidths, 
in order to compare the performance for different $F$, we need to consider the actual 
per-user throughput rates in ${\rm bit/s}$, obtained by multiplying the throughput rate in bit/s/Hz by the subchannel bandwidth, i.e., 
$\widetilde{\mu}_k := \bar{\mu}_k \times F W_{\rm RB}$. 
In the considered system,  the number of users per subchannel with $F = 1$ is given by $\Ktot(F=1) = K_{\rm tot} \frac{W_{\rm RB}}{W} = 120$. Then, for $F > 1$, the number of users per subchannel is given by $\Ktot(F) = F \times \Ktot(F=1)$.
 
In our results, we set $\tau_p = 20$, $\Kact = 70 \approx \frac{LM}{2}$ UEs per time slot, and $\widetilde{K}=80$ for the reassignment scheme. We identified this UL pilot dimension and user density regime empirically as a good choice for the considered network (a comparison of different $\tau_p$ and $\Kact$ is not shown here due to space limitations). We first evaluate the proposed schemes for a narrowband system with $F=1$ RB. Then, we consider the effect of higher frequency diversity $F = \{ 5, 10 \}$. For all simulations, we use the infinite buffer assumption with virtual arrivals and queues to achieve fairness among users. 


\subsection{Utility Optimization}

We consider HFS and PFS with the {\em fixed pilot} and {\em pilot reassignment} schemes, respectively. 
The proposed NUM-based scheduling approach is compared to a few ``baseline'' schemes. 
In particular, we have considered random selection, round-robin scheduling, and max-sum-rate scheduling.\footnote{Note that for the baseline schedulers, we use the pilot reassignment scheme. In case of conflicts, the pilot reassignment is repeated with a different user order (but same set of active users) until an assignment without conflicts is found.}
Random selection picks at each scheduling round $\Kact$ out of $\Ktot$ UEs per time slot, independent of the previous scheduling decisions. Round-robin scheduling sorts the UEs by their index and schedule them in lexicographic order, 
such that at scheduling slot $t = 1, 2, 3, \ldots$ the active user set is  $\{ t, t+1, \ldots, t+\Kact\}$ with indices repeated periodically modulo $\Ktot$.  The max-sum-rate scheduler selects in each time slot the $\Kact$ UEs 
to maximize  the sum expected service rate. This is equivalent to fixing the virtual queues 
in (\ref{user-selection}) such that $Q_k(t) = 1, \forall k, t$.
Fig.~\ref{fig:pfs_hfs_clusters_D} shows the per-user throughput CDF with $F = 1$ for PFS, HFS, and the 
three baseline schedulers. 
We notice that the max-sum-rate scheduler results in a very unfair throughput rate distribution, with a large number of 
UEs with zero throughput (see the jump at $\widetilde{\mu} = 0$ of the corresponding CDF). 
The PFS performs generally better than round robin and random scheduling. As expected, HFS equalizes the throughput rates across all UEs (the corresponding CDF is very close to a step function), and clearly yields a large improvement of the minimum rate with respect to PFS, while significantly reducing the maximum rates.
Also, Fig.~\ref{fig:pfs_hfs_clusters_D} compares the throughput CDF of PFS and HFS with fixed  pilots and  
pilot reassignment, for  $F=1$. We notice that for both HFS and PFS the degradation incurred by fixed pilots with respect to the 
more complex pilot reassignment scheme is very moderate. This indicates that although $\Kact$ is significantly smaller than $\Ktot$, allocating UL pilots to all users independently of the scheduling decision, and performing 
the WSRM under the proposed conflict graph constraint, is indeed an attractive approach.

\begin{figure}[t!]
	\centerline{\includegraphics[width=.7\linewidth]{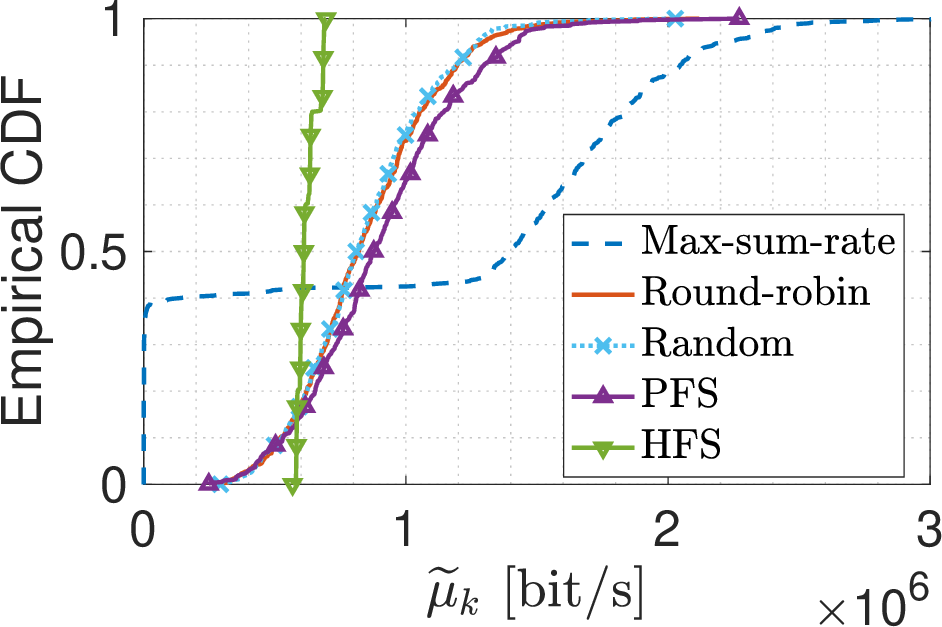}}
	\centerline{\includegraphics[width=.49\linewidth]{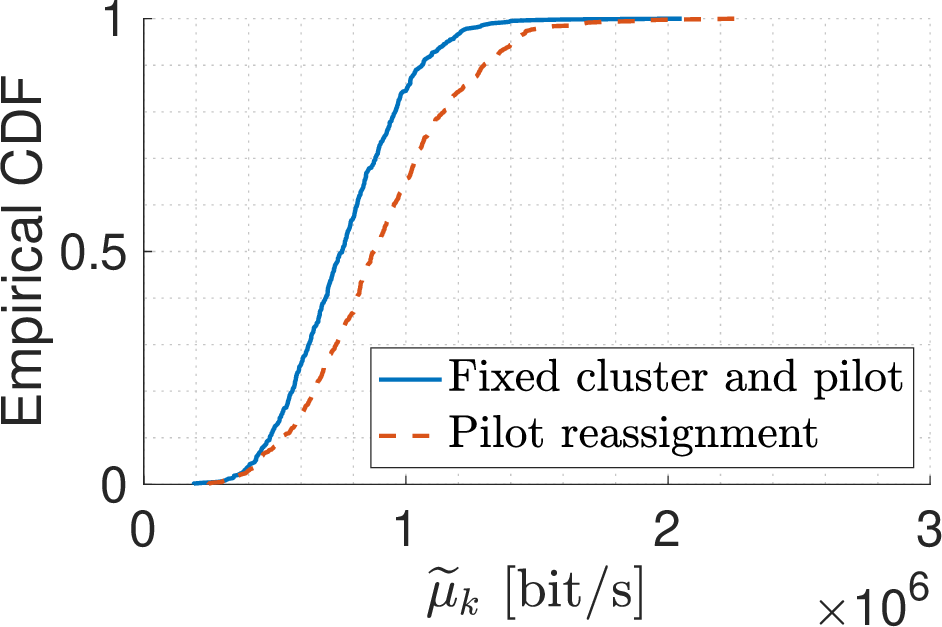} \includegraphics[width=.49\linewidth]{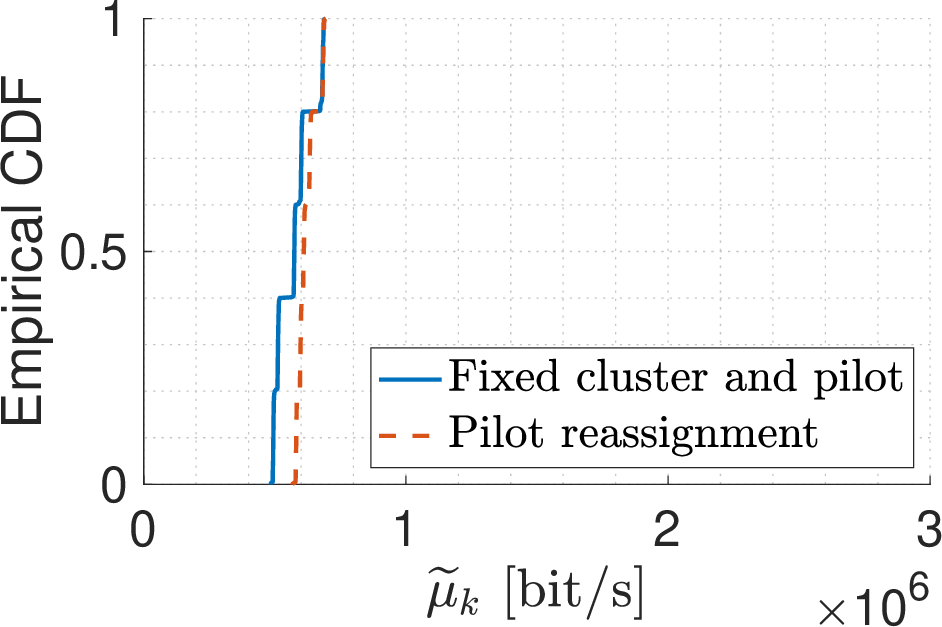}}
	\caption{The empirical CDF of the user throughput for PFS, HFS with pilot reassignment and the baseline schedulers (top). The empirical CDF of the user throughput for PFS (bottom left) and HFS (bottom right) with fixed pilots and pilot reassignment.}
	\label{fig:pfs_hfs_clusters_D}
	\vspace{-.4cm}
\end{figure}

\subsection{Effect of the frequency diversity}

Here we wish to assess the effect of increasing the frequency diversity by forming wider band subchannels with 
$F=\{5, 10\}$ RBs.  We compare all systems under pilot reassignment. 
Fig.~\ref{fig:empirical_cdf_narrow_wideband} shows the hardening of the instantaneous mutual information 
with increased frequency diversity order. In fact, the empirical CDF of the mutual information is less and less spread as $F$ increases. This allows a more aggressive instantaneous rate allocation in the active slots. As a result (see Fig.~\ref{fig:empirical_cdf_narrow_wideband}), the user throughput rates in a system with $F=\{ 5, 10 \}$ can be significantly increased compared to $F=1$ for both PFS and HFS. 
This is also evidenced in Fig.~\ref{fig:obj_function_narrow_wideband}, showing  the geometric mean of the user throughputs under PFS,  and the minimum user throughput under HFS. Notice that the former is directly related to the PFS objective function, since obviously $\left( \prod_{k=1}^{\Ktot(F)} \widetilde{\mu}_k \right)^\frac{1}{\Ktot(F)} = \exp\left( \frac{1}{\Ktot(F)} \left( \sum_{k=1}^{\Ktot(F)} \log \widetilde{\mu}_k \right) \right)$. 

We also observe that the improvement from $F=1$ to $F=5$ is quite significant, while 
further increasing the frequency diversity to $F=10$ yields a smaller performance gain, especially for HFS. This indicates 
a sort of saturation of the benefit provided by frequency diversity. As a matter of fact, since
the number of users per subchannel $\Ktot(F)$ increases linearly with $F$, the scheduler must solve a larger 
WSRM problem for larger $F$. Hence, it is advisable to choose a moderate value of $F$ that yields good frequency diversity gain
but not a too complex scheduler.

\begin{figure}[t!]
	\centerline{
	\begin{tikzpicture}[define rgb/.code={\definecolor{mycolor}{RGB}{#1}},
		rgb color/.style={define rgb={#1},mycolor}]
		\begin{axis}[
			enlarge y limits=false,clip=false,
			enlarge x limits=false,clip=true,
			no markers,
			width=.49\linewidth, 
			height=3.8cm, 
			grid=major, 
			grid style={gray!30}, 
			xlabel={\scriptsize Instantaneous mutual information},
			x label style={at={(axis description cs: 0.4, -.29)},rotate=0,anchor=south},
			ylabel={\scriptsize Empirical CDF},
			y label style={at={(axis description cs: -0.16, .5)},rotate=0,anchor=south},
			label style={inner sep=2pt}, 
			legend style={at={(0.5,-0.2)},anchor=north}, 
			yticklabels={},
			extra y ticks={0, .1, .2, .3, .4, 0.5, .6, .7, .8, .9, 1},
			extra y tick labels={0,,,,,0.5,,,,,1},
			xticklabels={},
			extra x ticks={0, 1, 2, 3, 4, 5, 6, 7},
			extra x tick labels={0, , 2, , 4, , 6, },
			x tick label style={yshift=.7,rotate=0,anchor=north}, 
			tick label style={font=\scriptsize},
			xmin=0,
			xmax=7,
			ymin=0,
			ymax=1,
			legend pos=south east,
			legend style={row sep=-0.0cm, inner ysep = 0cm, inner xsep = 0.05cm, font=\fontsize{6}{12}, at={(0.995,0.02)}, anchor=south east},
			legend cell align={left},
			legend style={kurze Legende}]
			\addplot[rgb color={0, 114, 189}, line width=0.8pt, dashed]
			table[blue,col sep=comma] {csvs/data_mut_info_wideband_F10.csv}; 			
			\addlegendentry{{ $F=10$}};
			\addplot[rgb color={217, 83, 25}, line width=0.8pt]
			table[blue,col sep=comma] {csvs/data_mut_info_wideband_F5.csv}; 			
			\addlegendentry{{ $F=5$}};
			\addplot[rgb color={237, 177, 32}, line width=0.8pt, densely dotted]
			table[blue,col sep=comma] {csvs/data_mut_info_wideband_F1.csv}; 			
			\addlegendentry{{ $F=1$}};
			\begin{scope}[scale=1.1, transform shape]
				\draw (5.25, -.22) node[left,black]{$\Ic_k\left( \vvv_k { , } \HH \right)$ \si{[bit/s/Hz]}};
			\end{scope}
		\end{axis}
	\end{tikzpicture}
	\hspace{.0cm} 
	\includegraphics[width=.49\linewidth]{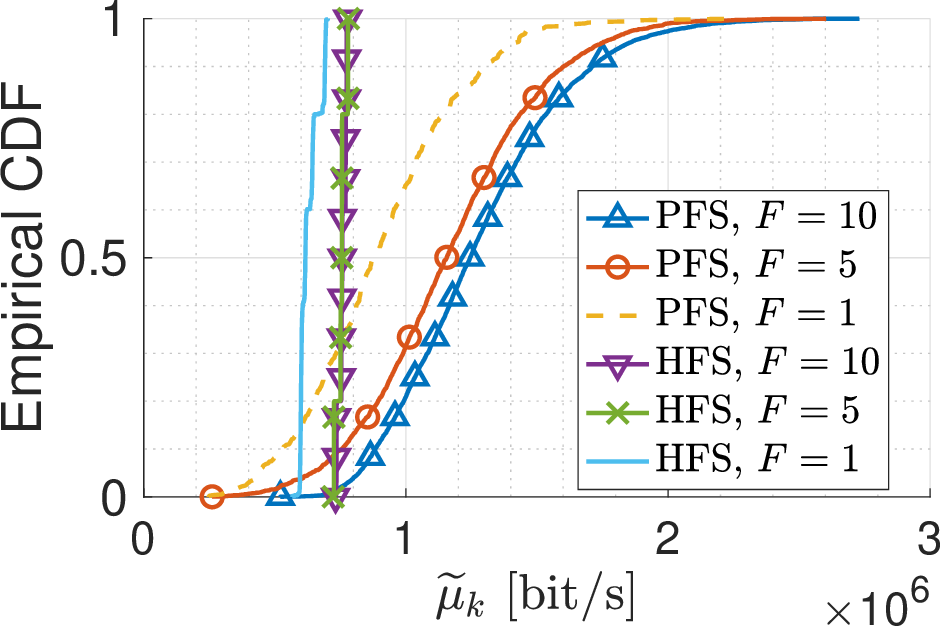} 
	}
	\vspace{-.1cm}
	\caption{The empirical CDF of $\Ic_k\left( \vvv_k, \HH \right)$ from $N=100$ samples for a given typical user $k$  in the center of the coverage area (left). The empirical CDF of the user throughput for PFS/HFS with $F = \{1, 5, 10\}$ using the pilot reassignment scheme (right).}
	\label{fig:empirical_cdf_narrow_wideband}
	\vspace{-.5cm}
\end{figure}

\begin{figure}[t!]
	\centerline{\includegraphics[width=.49\linewidth]{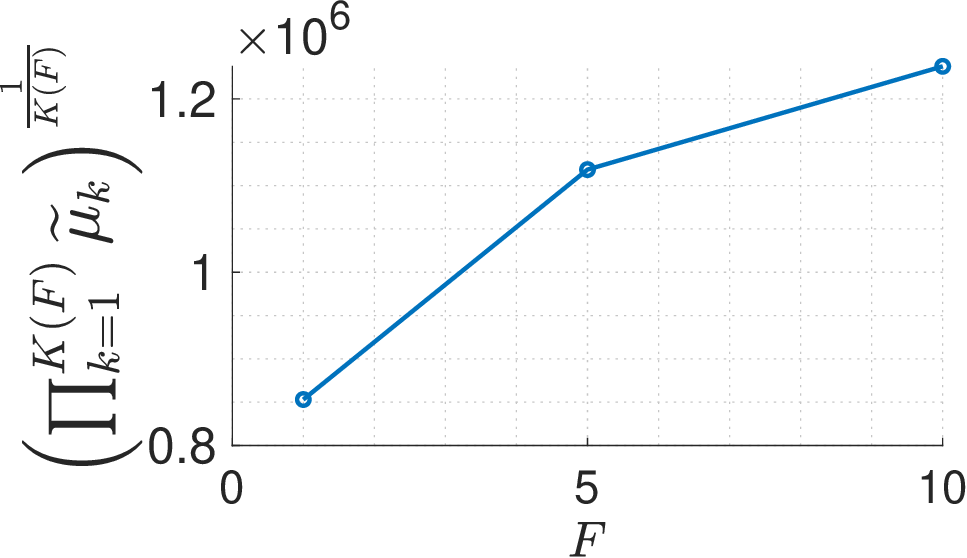} 
	\hspace{.0cm} 
	\includegraphics[width=.49\linewidth]{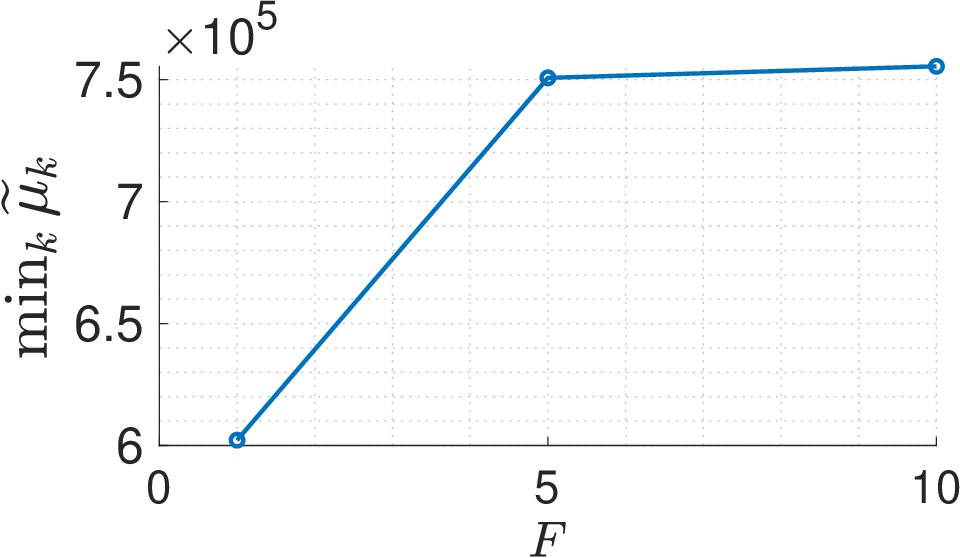}}
	\vspace{-.1cm}
	\caption{The geometric mean of the user throughput under PFS (left) 
	and the minimum throughput under HFS (right).}
	\label{fig:obj_function_narrow_wideband}
	\vspace{-.5cm}
\end{figure}

\subsection{Downlink Scheduling}

Although in the paper we have mainly considered the UL for the sake of exposition, the same approach can be applied to the DL.
Fig.~\ref{fig:downlink_distributed_pfs} shows the UL/DL results for PFS and HFS, respectively, 
in the case of $F=1$ and pilot reassignment, where the DL precoding vectors are identical to the UL detection vectors 
and uniform power allocation over all DL data streams is used, such that the total transmit power for UL and DL is identical, as explained previously. We notice that a similar user throughput is achieved in both cases, confirming the approximate duality results of  our previous work \cite{kddi_uldl_precoding}. 
Note that in the case of practical relevance of an imbalance in UL and DL traffic demands, 
a different number of time slots can be allocated to the UL and DL to meet the respective demand. 
Therefore, while our results show {\em balanced} UL and DL throughput distributions, it is clear that by unequal slot allocation 
one can adapt UL and DL to the actual traffic demand.

\section{Conclusions}

In this work, we considered a user-centric cell-free massive MIMO network with a total number of users 
much larger than the optimal user load. In order to serve all users in the network with a fair distribution of throughput rates, 
we proposed a dynamic scheduling scheme based on Network Utility Maximization via the 
Lyapunov DPP approach. While the approach is quite standard, we have addressed several issues that are specific of the system at hand and represent the main novelty of this work. 
In particular, we considered the problem of pilot and cluster assignment, which can be fixed for all users, or dependent on the scheduling decision (reassignment). 
The key component of the dynamic scheduler is a novel conflict-graph constrained 
WSRM problem to be solved at each scheduling round, in the form of an integer linear program.  
Also, we considered the problem of instantaneous rate allocation in the information outage regime, based on the empirical CDF of the instantaneous mutual information that each UE can accumulate on a window of past time slots. This is very different from the standard works on cell-free massive MIMO, that consider ergodic rates and users 
continuously active on a (virtually infinite) sequence of fading states. 
In our case, block by block coding/decoding is dictated by the fact that users are scheduled on generally non-consecutive slots, and coding across slots would result in excessive decoding delay, incompatible with the low latency requirement of 5G systems. 
The use of information outage rates also illuminated the role of frequency diversity. For a given total number of users in the system and a total system bandwidth, allocating wider subchannels yields larger frequency diversity order, but also more users per subchannel. This means that each user is active for a smaller fraction of time, but when active it transmits at higher rates
(in bit/s). We have verified that a moderate amount of frequency  diversity is indeed beneficial for the user throughput. However, this benefit tends to saturate and since larger subband channels involve a higher complexity in the scheduler (namely, a larger size of the integer program to solve the WSRM problem at each scheduling round), it is advisable to 
carefully dimension the subchannel bandwidth in order to achieve a good tradeoff between throughput performance and scheduler complexity. 
\begin{figure}[t!]
	\centerline{\includegraphics[width=.49\linewidth]{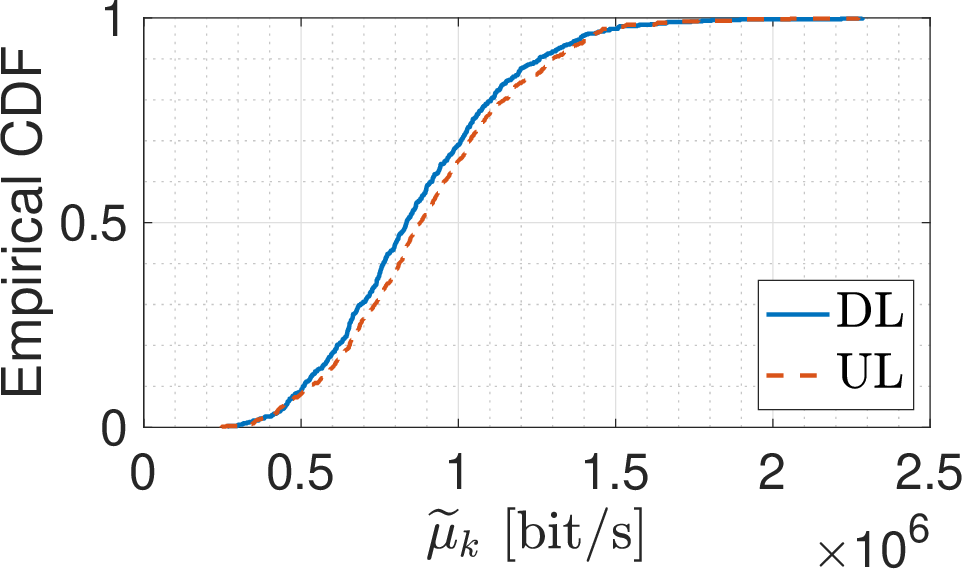}
		\includegraphics[width=.49\linewidth]{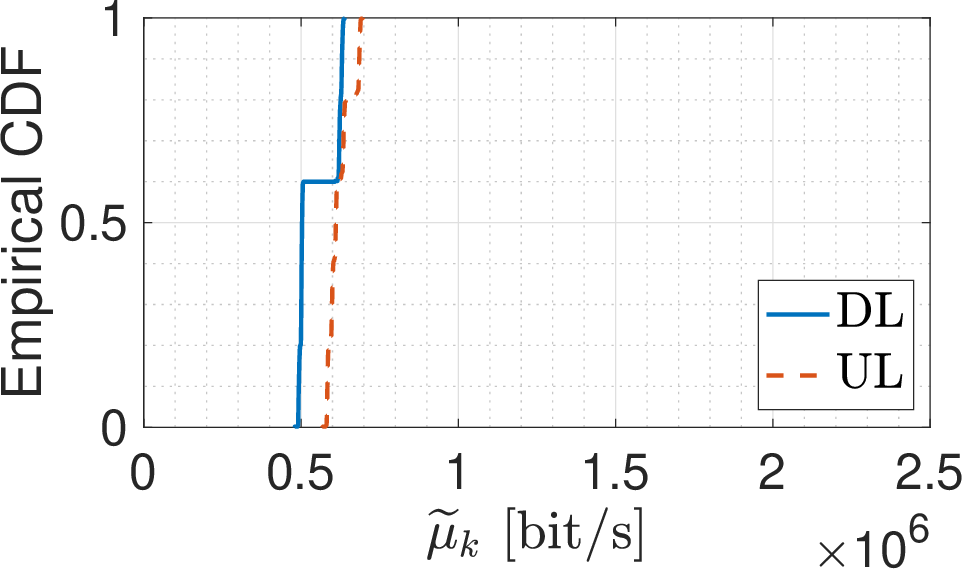} }
	\vspace{-.2cm}
	\caption{The empirical CDF of the DL and UL user throughput for PFS (left) and HFS (right). In both cases $F=1$.}
	\vspace{-.5cm}
	\label{fig:downlink_distributed_pfs}
\end{figure}

Overall, we have demonstrated the effectiveness of the proposed approach by considering a system with 
10,000 users in $0.2 \times 0.2$ km$^2$, 
a total bandwidth of 60 MHz and 200 infrastructure antennas (20 RUs with 10 antennas each). 
Under PFS with $F = 5$ the system achieves a geometric mean throughput rate per user of approximately $1.1$ Mb/s (see Fig.~\ref{fig:obj_function_narrow_wideband}). This corresponds to a total rate (sum over all users) of 11 Gb/s over 60 MHz, 
i.e., a total spectral efficiency of 183.3 bit/s/Hz per $0.2 \times 0.2$ km$^2$ or, equivalently, 
4582.5 bit/s/Hz per km$^2$. We conclude that such system is capable of providing an excellent mean throughput rate with 
fairness among a large population of users, and it is therefore a very attractive solution to serve extremely dense localized areas, such as sport stadiums, train stations, shopping malls, and similar. 

\appendices

\section{Proof of Theorem~\ref{thm:performance_guarantees}} \label{app:proofs}
For the proof of Theorem~\ref{thm:performance_guarantees}, we use the Lyapunov drift approach. We let $\Qm = \left[ Q_1 \dots Q_\Ktot \right]^\transp$, and the Lyapunov function defined on $\RR_{+}^\Ktot$ is given by $\Lc(\Qm) = \frac{1}{2} \sum_{k=1}^{\Ktot} Q_k^2$ with the corresponding one-step Lyapunov drift
 \begin{gather}
 	\Delta(\Qm(t)) = \EE [ \Lc(\Qm (t+1) ) - \Lc(\Qm (t) ) | \Qm(t) ] \label{eq:def_lyapunov_drift}.
 \end{gather}
Further, we use (\ref{Qk-evo-virtual}) and write
\begin{align}
	&Q_k(t+1)^2 \nonumber \\
	&= \left( \max\left\{ Q_k(t) - \mu_k(t), 0 \right\} + A_k(t) \right)^2 \nonumber \\
	& \leq [Q_k(t) - \mu_k(t)]^2 + A_k^2(t) \nonumber \\ & \hspace{1cm} + 2 A_k(t) \max\left\{ Q_k(t) - \mu_k(t), 0 \right\} \nonumber \\
	& = Q_k^2(t) + \mu_k^2(t) - 2Q_k(t)\mu_k(t) + A_k^2(t) \nonumber \\ & \hspace{1cm} + 2 A_k(t) \max\left\{ Q_k(t) - \mu_k(t), 0 \right\} \nonumber \\		
	& \leq Q_k^2(t) + \mu_k^2(t) + A_k^2(t) - 2 Q_k(t) \left[ \mu_k(t) - A_k(t) \right]. \nonumber
\end{align}
Summing with respect to $k$ and applying the conditional expectation $\EE[\cdot | \Qm(t)]$, we arrive at
\begin{align}
	\Delta \Qm(t) & \leq \frac{1}{2} \sum_{k=1}^{\Ktot} \EE\left[ \mu^2_k(t) + A^2_k(t) | \Qm(t) \right] \nonumber\\ 
	& \hspace{1cm} - \sum_{k=1}^{\Ktot} Q_k(t) \EE\left[ \mu_k(t) - A_k(t) | \Qm(t) \right]. \label{eq:DeltaQ_leq_R_A}
\end{align}
In addition, we need the following Lemma to prove Theorem~\ref{thm:performance_guarantees}.
\begin{lemma} \label{lemma:queues_R}
	Let the service rates $\left\{ \mu_k(t) \right\}$ be obtained by the scheduler $\gammadyn$. Then, for any $\bar{\muv} \in \Rs$, we have
	$\sum_{k=1}^{\Ktot} Q_k(t) \EE \left[ \mu_k(t) | \Qm(t) \right] \geq \sum_{k=1}^{\Ktot} Q_k(t) \bar{\mu}_k \label{eq:queues_Rbar}$.
\end{lemma}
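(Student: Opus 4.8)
The plan is to recognize Lemma~\ref{lemma:queues_R} as the standard ``drift-minimization'' inequality of the DPP framework, which holds precisely because the user-selection step (\ref{user-selection}) of $\gammadyn$ is, by construction, the per-slot maximizer of the conditionally expected weighted service rate. First I would compute $\EE[\mu_k(t)\mid\Qm(t)]$ under $\gammadyn$. Given $\Qm(t)$, the quantities $\bar{R}_k = r_k^* P_k(r_k^*)$ are deterministic constants (they depend only on the local statistics through $P_k$, via assumption A2 and the definition (\ref{inst-rate-allocation})), so the active set $\Ac(t)$ selected by (\ref{user-selection}) is itself a deterministic function of $\Qm(t)$. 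Using (\ref{eq:allocated_rate}), the i.i.d.-over-slots assumption on $\HH(t)$ (so the current $\Ic_k$ is independent of the past-determined $\Qm(t)$), and A2 (which makes the distribution of $\Ic_k$ independent of the active set), I obtain
\[
\EE[\mu_k(t)\mid\Qm(t)] = \left(1-\tfrac{\tau_p}{T}\right)\bar{R}_k\,\mathbbm{1}\{k\in\Ac(t)\},
\]
so that
\[
\sum_{k=1}^{\Ktot} Q_k(t)\,\EE[\mu_k(t)\mid\Qm(t)] = \left(1-\tfrac{\tau_p}{T}\right)\sum_{k\in\Ac(t)} Q_k(t)\bar{R}_k .
\]

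Next I would bring in a comparison policy. Since $\bar{\muv}\in\Rs$, the sufficiency of stationary policies together with the standard randomized-policy characterization of the throughput region guarantees a stationary policy $\gamma^\star\in\Gamma$ with $\EE[\mu_k(\HH,\gamma^\star)]\geq\bar{\mu}_k$ for every $k$; by Lemma~\ref{lemma-rate} I may further assume $\gamma^\star$ uses the rates $r_k^*$, since this only raises its throughput. Let $\Ac^\star(t)$ denote the (possibly random) active set chosen by $\gamma^\star$. Being stationary, its decisions depend only on the network state $\Omegam(t)$ and are therefore independent of $\Qm(t)$, and it is feasible, i.e. $|\Ac^\star(t)|\leq\Kact$ for every realization. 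Because the WSRM step of $\gammadyn$ maximizes $\sum_{k\in\Ac}Q_k(t)\bar{R}_k$ over all feasible $\Ac$, I get the pathwise bound
\[
\sum_{k\in\Ac(t)} Q_k(t)\bar{R}_k \;\geq\; \sum_{k\in\Ac^\star(t)} Q_k(t)\bar{R}_k .
\]

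Taking $\EE[\cdot\mid\Qm(t)]$ of the right-hand side and using the independence of $\Ac^\star(t)$ from $\Qm(t)$ together with $Q_k(t)\geq 0$, that right-hand side becomes $\sum_k Q_k(t)\bar{R}_k\,\PP(k\in\Ac^\star(t)) = \tfrac{1}{1-\tau_p/T}\sum_k Q_k(t)\EE[\mu_k(\HH,\gamma^\star)]$. Reassembling the factor $(1-\tau_p/T)$ and invoking $\EE[\mu_k(\HH,\gamma^\star)]\geq\bar{\mu}_k$ with $Q_k(t)\geq 0$ then yields the claim. The main obstacle is not any single calculation but the bookkeeping of this comparison step: I must justify that an arbitrary point of the convex-hull closure $\Rs$ is dominated by a single stationary (randomized) policy, that this policy's decisions are genuinely independent of the queue state, and that the pathwise optimality of the integer program (\ref{user-selection}) survives the passage to conditional expectations. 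Once these three points are secured, the inequality is immediate.
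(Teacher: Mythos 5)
Your proof is correct, and at its core it uses the same mechanism as the paper's: the WSRM step \eqref{user-selection} of $\gammadyn$ is, per slot, the maximizer of the conditionally expected weighted service rate, so it dominates whatever a stationary comparison policy would do. The execution differs in the comparison step, though. The paper compares at the level of \emph{weighted sums}: it notes that the maximum of the linear functional $\sum_k Q_k(t) r_k$ over $\Rs$ is attained (really: approached) by a stationary policy --- which is automatic for linear functionals over a closed convex hull, with no need for per-coordinate domination --- and then chains iterated expectations conditioned on $\Omegam(t)$ with the per-state maximization. You instead compare \emph{per coordinate}: you invoke the randomized-policy characterization of $\Rs$ to produce a single $\gamma^\star$ with $\EE[\mu_k(\HH,\gamma^\star)]\geq\bar{\mu}_k$ for every $k$, upgrade its rate allocation via Lemma~\ref{lemma-rate}, and then use the pathwise optimality of the integer program followed by a conditional expectation. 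What your route buys is concreteness: the closed-form expression $(1-\tau_p/T)\bar{R}_k\,\mathbbm{1}\{k\in\Ac(t)\}$ makes explicit exactly where A1, A2, the i.i.d. channel assumption, and Lemma~\ref{lemma-rate} enter, and the pathwise inequality is elementary. What it costs is exactly the bookkeeping you flag: dominating an arbitrary point of the closed convex hull coordinatewise by a \emph{single} policy requires stationary \emph{randomized} mixtures (which, under the paper's definition of $\Gamma$ as deterministic maps $\Omegam(t)\mapsto(\Ac(t),\rv(t))$, are not literally elements of $\Gamma$) plus an $\epsilon$-argument for boundary points; the paper's weighted-sum route sidesteps this, since the supremum of a nonnegative-weighted linear functional over $\text{coh}(\cdot)$ equals its supremum over the generating union of dominated regions. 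To be fair, the paper's own assertion that this supremum is ``achieved by some $\gamma\in\Gamma$'' carries an imprecision of the same flavor, so neither proof is airtight on this point; both are standard and fixable within the Lyapunov DPP framework of \cite{georgiadis2006resource}.
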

\ \ \ \textit{Proof:} Notice that $\Rs$ is a convex compact region in $\RR_{+}^\Ktot$. For any fixed non--negative weight vector $\Qm$, the maximum of the linear function $\sum_{k=1}^{\Ktot} Q_k r_k$, where $\rv \in \Rs$, is achieved by some $\gamma \in \Gamma$. 
We let $\Omegam(t)$ denote the mutual information statistics available in time slot $t$ used to compute (\ref{inst-rate-allocation}).
Hence, for any $\bar{\muv} \in \Rs$ and weight vector $\Qm(t)$, there exists $\gamma \in \Gamma$ such that  
\begin{align}
 	&\sum_{k=1}^{\Ktot} Q_k(t) \bar{\mu}_k \nonumber \\
 	&= \sum_{k=1}^{\Ktot} Q_k(t) \EE \left[ \mu_k\left( \HH(t), \gamma( \Omegam(t) ) \right) \right] \nonumber \\
 	&= \sum_{k=1}^{\Ktot} Q_k(t) \EE \Bigl[ \EE \left[ \mu_k\left( \HH(t), \gamma( \Omegam(t) ) | \Omegam(t), \gamma \right) \right] \Bigr] \nonumber \\
 	&\leq \EE \left[  \max\limits_{\Ac(t) } \sum_{k=1}^{\Ktot} Q_k(t) \EE \left[  \mu_k\left( \HH(t), \Ac(t) \right) | \Omegam(t) \right] \bigg| \Qm(t)  \right] \nonumber \\
 	&= \sum_{k=1}^{\Ktot} Q_k(t) \EE \Bigl[  \EE \left[  \mu_k\left( \HH(t), \gammadyn ( \Omegam(t) ) \right) | \Omegam(t) \right] \Big| \Qm(t) \Bigr] .
 \end{align}
 Since we assumed that the service rates $\mu_k(t)$ are obtained by applying the policy $\gammadyn$, then by definition  $\EE \Bigl[  \EE \left[  \mu_k\left( \HH(t), \gammadyn ( \Omegam(t) ) \right) | \Omegam(t) \right] \Big| \Qm(t) \Bigr] = \EE \left[ \mu_k(t) | \Qm(t) \right]$, and the Lemma is proved. \hfill $\blacksquare$

\textbf{Proof of Theorem~\ref{thm:performance_guarantees}.} We define $\overline{\Am} (t) = \frac{1}{t} \sum_{\tau = 0}^{t-1} \EE [\Am(\tau)]$ and $\overline{\muv} (t) = \frac{1}{t} \sum_{\tau = 0}^{t-1} \EE [\muv(\tau)]$, where $\Am(\tau) = \left[ A_1(\tau) \dots A_\Ktot(\tau) \right]^\transp$ and $\muv(\tau)$ are the virtual arrival process and the service rate vector induced by $\gammadyn$. From \cite{shirani2010mimo}, we know that
 \begin{align}
 	\frac{1}{t} \sum_{\tau=0}^{t-1} \EE [ \Am(\tau) ] \leq \frac{1}{t} \sum_{\tau=0}^{t-1} \EE [ \muv(\tau) ] + \frac{ \EE [ \Qm(t) ] }{ t } \label{eq:A_leq_R_plus_Q} .
 \end{align}
We will use the following preliminary fact, whose proof uses (\ref{eq:A_leq_R_plus_Q}) and the fact that strong stability and uniformly bounded arrival processes imply mean-rate stability (i.e., $\EE [ \Qm(t) /t \rightarrow \zerov ]$) \cite{georgiadis2006resource}.
\begin{fact} \label{fact2:limit_A_R}
	We assume that the queues $\Qm(t)$ are strongly stable and that there is a finite upper bound $A_{\rm max}$ on arrivals for all $t$. If $g(\cdot)$ is a continuous and componentwise non-decreasing function, then
	\begin{align}
		\liminf_{t\rightarrow\infty} g( \overline{\Am}(t) ) &\leq \liminf_{t\rightarrow\infty} g( \overline{\muv}(t) )  \label{eq:liminf_A_leq_R} , \\
		\limsup_{t\rightarrow\infty} g( \overline{\Am}(t) ) &\leq g( \bar{\muv}^\star (A_{\rm max} ) ) \label{eq:limsup_A_leq_R} .
	\end{align}
\end{fact}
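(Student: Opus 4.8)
The plan is to derive both inequalities from the single-slot accounting bound \eqref{eq:A_leq_R_plus_Q} together with the standard implication, quoted from \cite{georgiadis2006resource} in the statement, that strong stability plus uniformly bounded arrivals forces \emph{mean-rate stability}, i.e. $\EE[\Qm(t)]/t \to \zerov$. Combined with \eqref{eq:A_leq_R_plus_Q}, this makes the componentwise gap $\overline{\Am}(t) - \overline{\muv}(t)$ asymptotically nonpositive. First I would record the two compactness facts I lean on throughout: the arrivals satisfy $\overline{\Am}(t) \in [0,A_{\rm max}]^{\Ktot}$ by the box constraint in \eqref{arrival-rate}, and the averages $\overline{\muv}(t)$ stay in a fixed compact set because each $\mu_k(t)$ is dominated by the single-user mutual information $\frac{1}{F}\sum_f \log(1 + |\hh_{k}(t,f)|^2\SNR)$, whose mean is finite and in fact appears (at $k=k^\star$) in the constant $C$. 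Hence every sequence in $t$ admits convergent subsequences, and since $g$ is continuous I may pass limits through $g$ freely.

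For \eqref{eq:liminf_A_leq_R}, I would pick a subsequence $t_j$ along which $g(\overline{\muv}(t_j)) \to \liminf_t g(\overline{\muv}(t))$, then refine it so that $\overline{\muv}(t_j) \to \muv^\infty$ and $\overline{\Am}(t_j) \to \av^\infty$ simultaneously. Mean-rate stability applied to \eqref{eq:A_leq_R_plus_Q} along $t_j$ gives $\av^\infty \le \muv^\infty$ componentwise, so monotonicity of $g$ yields $g(\av^\infty) \le g(\muv^\infty)$. Since a $\liminf$ is dominated by any subsequential limit and $g(\muv^\infty)=\liminf_t g(\overline{\muv}(t))$ by continuity, $\liminf_t g(\overline{\Am}(t)) \le g(\av^\infty) \le g(\muv^\infty) = \liminf_t g(\overline{\muv}(t))$, which is the claim.

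For \eqref{eq:limsup_A_leq_R}, the extra ingredient is that limit points of $\overline{\muv}(t)$ belong to $\Rs$. I would argue that, conditioned on the queue state $\Qm(\tau)$, the per-slot decision of $\gammadyn$ coincides with that of a feasible stationary policy, so each conditional expected rate vector is a point of $\Rs$; averaging over $\Qm(\tau)$ and over $\tau$ is a convex combination, and $\Rs$ is convex, whence $\overline{\muv}(t) \in \Rs$ and any limit point $\muv^\infty \in \Rs$ by closedness. Taking a subsequence realizing $\limsup_t g(\overline{\Am}(t))$ with $\overline{\Am}(t_j) \to \av^\infty$ and $\overline{\muv}(t_j) \to \muv^\infty \in \Rs$, mean-rate stability again gives $\av^\infty \le \muv^\infty$; since $\Rs$ is comprehensive on $\RR_+^{\Ktot}$ (immediate from the inequality defining it in \eqref{eq:ergodic_rate_region}), we get $\av^\infty \in \Rs$, and the box constraint gives $\av^\infty \in [0,A_{\rm max}]^{\Ktot}$. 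Optimality of $\bar{\muv}^\star(A_{\rm max})$ over $\Rs \cap [0,A_{\rm max}]^{\Ktot}$ then forces $g(\av^\infty) \le g(\bar{\muv}^\star(A_{\rm max}))$, i.e. $\limsup_t g(\overline{\Am}(t)) \le g(\bar{\muv}^\star(A_{\rm max}))$.

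The main obstacle is the middle step of the second inequality: showing that the non-stationary, queue-dependent trajectory of $\gammadyn$ has its time-averaged service rates inside $\Rs$. The clean justification is the per-slot ``conditional stationarity'' argument above, which reduces each slot's conditional expectation to an achievable point and then invokes convexity and closedness of $\Rs$; some care is needed to make the conditioning and measurability precise, but no hard estimate is involved. Everything else — the compactness extractions, the use of continuity and monotonicity of $g$, and the mean-rate stability implication — is routine, the last being cited directly from \cite{georgiadis2006resource}.
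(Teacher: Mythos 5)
Your proposal is correct and rests on exactly the two ingredients the paper itself invokes for this fact: the accounting bound \eqref{eq:A_leq_R_plus_Q} and the implication that strong stability with uniformly bounded arrivals gives mean-rate stability $\EE[\Qm(t)]/t \to \zerov$, cited from \cite{georgiadis2006resource}. The paper only sketches the argument by citation, and your filled-in details (subsequence extraction with continuity and monotonicity of $g$, and the observation that time-averaged expected service rates of $\gammadyn$ lie in the convex, closed region $\Rs$ because each queue-conditioned decision coincides with a feasible stationary policy) match the standard development in that reference, so there is nothing to flag.
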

\hfill $\lozenge$

We further observe that $\EE[\mu_{k^\star}^2(t)] \leq 
	\EE \left[ \left( \frac{1}{F} \sum_{f=1}^{F} \log \left( 1 + \frac{ | \hh_{k^\star}(t, f) |^2 }{\SNR^{-1}} \right) \right)^2 \right]  \label{eq:max_mu} $,
where the latter is the maximum achievable instantaneous rate of UE $k^\star$, given by $\underset{ k \in [\Ktot] }{\arg\max} \ \frac{1}{F} \sum\limits_{f=1}^{F} \log \left( 1 + \frac{ | \hh_k(t, f) |^2 }{\SNR^{-1}} \right)$,
under perfect CSI and as if it was alone in the system.
It follows that
\begin{align}
	&\frac{1}{2} \sum\limits_{k=1}^{\Ktot} \EE\left[ \mu^2_k(t) + A^2_k(t) | \Qm(t) \right] \nonumber \\ 
	& \leq \frac{\Ktot}{2} \left( A_{\rm max}^2 + \EE \left[ \left( \frac{1}{F} \sum\limits_{f=1}^{F} \log \left( 1 + \frac{ | \hh_{k^\star}(t, f) |^2 }{\SNR^{-1}} \right) \right)^2 \right] \right) \nonumber \\ 
	& \defines C < \infty . \label{eq:def_C}
\end{align}

From (\ref{eq:DeltaQ_leq_R_A}), (\ref{eq:def_C}) and Lemma~\ref{lemma:queues_R}, we know that
\begin{align}
	\Delta(\Qm(t)) \leq C - \sum_{k=1}^{\Ktot} Q_k(t) \bar{\mu}_k + \sum_{k=1}^{K} Q_k(t) \EE \left[A_k(t) | \Qm(t) \right] \label{eq:thm2_lyapunov_drift_start} ,
\end{align}
where $\Delta(\Qm(t))$ is the Lyapunov drift defined in (\ref{eq:def_lyapunov_drift}), $C$ is given in (\ref{eq:def_C}) and $\bar{\muv} = \left[\bar{\mu}_1, \dots, \bar{\mu}_\Ktot \right]^\transp$ is any rate vector in $\Rs$. Following the technique in \cite{georgiadis2006resource, neely2008fairness}, we subtract a term related to the utility function from both sides of (\ref{eq:thm2_lyapunov_drift_start}), which yields
\begin{align}
	& \Delta( \Qm(t) ) - V \EE \left[ g(\Am(t)) | \Qm(t) \right] \nonumber \\
	& \leq C - \sum_{k=1}^{\Ktot} Q_k(t) \bar{\mu}_k  \nonumber \\ 
	& \hspace{2cm} + \EE \left[ \sum_{k=1}^{K} Q_k(t) A_k(t) - V g(\Am(t)) \Big| \Qm(t) \right] .
\end{align} 
We note that $\gammadyn$ is defined in (\ref{arrival-rate}) such that it minimizes the right hand side over all vectors $\av$ that satisfy $0 \leq a_k \leq A_{\rm max}$ for all $k$. Now, let $\zv$ be any vector in $\Rs$ that satisfies $0 \leq z_k \leq A_{\rm max}$ for all $k$. Then 
\begin{align}
	&\Delta( \Qm(t) ) - V \EE \left[ g(\Am(t)) | \Qm(t) \right] \nonumber \\
	& \leq C - \sum_{k=1}^{\Ktot} Q_k(t) \bar{\mu}_k + \sum_{k=1}^{K} Q_k(t) z_k(t) - V g(\zv) .
\end{align}
Taking expectations of both sides of the above inequality and using the law of iterated expectations yields
\begin{align}
	& \EE[ \Lc (\Qm(t+1) ) ] - \EE[ \Lc (\Qm(t)) ] - V\EE[ g(\Am(t)) ] \nonumber \\
	& \leq C - \sum_{k=1}^{\Ktot} \EE[Q_k(t)] (\bar{\mu}_k - z_k) -Vg(\zv) .
\end{align}
We assume $\Qm(0) = \zerov$ for simplicity. The above inequality holds for all $t$. Summing over $\tau \in \{ 0, \dots, t-1 \}$, dividing by $t$, rearranging terms, and using the non-negativity of $\Lc(\cdot)$ we have
\begin{align}
	\frac{1}{t} \sum_{\tau = 0}^{t - 1}\sum_{k=1}^{K} \EE [Q_k(\tau)] (\bar{\mu}_k - z_k) \leq C + Vg(\overline{\Am}(t)) - Vg(\zv) \label{eq:queues_jensens_ineq} , 
\end{align}
where we used Jensen's inequality in the concave function $g(\cdot)$. 
The above inequality holds for all $t$, all $\bar{\muv} \in \Rs$, and all $\zv \in \Rs$ such that $0 \leq z_k \leq A_{\rm max}$ for all $k$. Parts (a) and (b) of Theorem~\ref{thm:performance_guarantees} are proven by plugging different values into (\ref{eq:queues_jensens_ineq}). We first prove part (b).

\textit{Proof of part (b).} Take any point $\tilde{\zv} \in \Rs$ such that $\tilde{\zv} = \left[ \tilde{z}_1 \dots \tilde{z}_\Ktot \right]^\transp $ and $0 \leq \tilde{z}_k \leq A_{\rm max}$ for all $k$. Choose $\bar{\muv} = \tilde{\zv}$ and $\zv = \beta \tilde{\zv}$, for any $\beta \in [0, 1]$. Then, from (\ref{eq:queues_jensens_ineq}), we have
\begin{gather}
	\frac{1}{t} \sum_{\tau = 0}^{t-1} \sum_{k=1}^{\Ktot} \tilde{z}_k \EE[Q_k(\tau)] \leq \frac{C + V g(\overline{\Am}(t)) - V g(\beta\tilde{\zv}) }{1 - \beta} . \label{eq:queues_jensens_ineq_x_z}
\end{gather}
Now, we first prove that the queues are strongly stable and then, using Fact~\ref{fact2:limit_A_R}, we obtain part (b) of Theorem~\ref{thm:performance_guarantees}. Notice that $g( \overline{\Am}(t) ) \leq g(\Am_{\rm max})$, where $\Am_{\rm max}$ is a vector with each entry equal to $A_{\rm max}$. Using this bound in (\ref{eq:queues_jensens_ineq_x_z}) and taking a $\limsup$ yields
\begin{gather}
	\limsup_{t\rightarrow\infty} \frac{1}{t} \sum_{\tau=0}^{t-1} \sum_{k = 1}^{\Ktot} \tilde{z}_k \EE[ Q_k(t) ] \leq \frac{C + V g(\Am_{\rm max}) - V g(\beta\tilde{\zv}) }{1 - \beta} . \label{eq:queues_jensens_ineq_Amax}
\end{gather}
By assumption, there exists at least one point $\rv \in \Rs$ that has all positive entries and such that $g(\frac{\rv}{2}) > - \infty$. Choosing $\beta = \frac{1}{2}$ and $\tilde{\zv} = \rv$, it follows that the right-hand site of (\ref{eq:queues_jensens_ineq_Amax}) is finite and hence all queues are strongly stable.

Because of strong stability and since the arrival processes are bounded by $A_{\rm max} < \infty$, we can apply inequality (\ref{eq:limsup_A_leq_R}) of Fact~\ref{fact2:limit_A_R} to the right-hand site of (\ref{eq:queues_jensens_ineq_x_z}) after taking a $\limsup$ and obtain the result of part (b).

\textit{Proof of part (a).} We obtain
$g( \overline{\Am} (t) ) \geq g(  \bar{\muv}^\star(A_{\rm max})  ) - \frac{C}{V}$ by plugging $\bar{\muv} = \zv = \bar{\muv}^\star (A_{\rm max})$ into (\ref{eq:queues_jensens_ineq}).
By taking $\liminf$ and using (\ref{eq:liminf_A_leq_R}) of Fact~\ref{fact2:limit_A_R}, we obtain the result of (a).

 \bibliography{IEEEabrv,scheduling-journal}

\end{document}